\documentclass{conm-p-l}

\copyrightinfo{2008}{}

\setcounter{page}{1}

\usepackage{graphicx}
\usepackage{subfigure}

\usepackage{amssymb,amsmath}

\usepackage{pstricks}
\usepackage{pst-plot}

\newtheorem{theorem}{Theorem}[section]
\newtheorem{lemma}[theorem]{Lemma}

\theoremstyle{definition}
\newtheorem{definition}[theorem]{Definition}
\newtheorem{example}[theorem]{Example}

\theoremstyle{remark}

\numberwithin{equation}{section}

%    Absolute value notation

%    Blank box placeholder for figures (to avoid requiring any
%    particular graphics capabilities for printing this document).

\newcommand{\e}{\epsilon}

\newcommand{\ga}{\gamma}
\newcommand{\Ga}{\Gamma}

\newcommand{\Dl}{\Delta}
\renewcommand{\th}{\theta}
\newcommand{\ra}{\rightarrow}
\newcommand{\al}{\alpha}
\newcommand{\be}{\beta}
\newcommand{\sg}{\sigma}

\newcommand{\pa}{\partial}
\newcommand{\z}{\zeta}

\newcommand{\La}{\Lambda}

\newcommand{\la}{\lambda}
\newcommand{\bq}{\bar{q}}

\newcommand{\om}{\omega}
\newcommand{\Om}{\Omega}

\newcommand{\hH}{\hat{H}}
\newcommand{\vphi}{\varphi}
\newcommand{\hm}{\hat{m}}

\newcommand{\non}{\nonumber}
\newcommand{\tpsi}{\tilde{\psi}}
\newcommand{\hphi}{\hat{\phi}}
\newcommand{\bnu}{\bar{\nu}}
\newcommand{\hpsi}{\hat{\psi}}

%Commuting Diagram Need

\begin{document}

\title[Chaotic Spin Dynamics of a Long Nanomagnet]
{Chaotic Spin Dynamics of a Long Nanomagnet Driven by a Current}

%    Information for first author
\author{Yueheng Lan}

\address{Department of Mechanical Engineering, University of California,
Santa Barbara, CA 93106}

\email{yueheng\_lan@yahoo.com}

\author{Y. Charles Li}
%    Address of record for the research reported here
\address{Department of Mathematics, University of Missouri, 
Columbia, MO 65211}
%    Current address
\curraddr{}
\email{cli@math.missouri.edu}
%    \thanks will become a 1st page footnote.
\thanks{}

\subjclass{Primary 35, 65, 37; Secondary 78}
\date{}

\dedicatory{}

\keywords{Magnetization reversal, spin-polarized current, chaos, 
Darboux transformation, Melnikov function.}

\begin{abstract}
We study the spin dynamics of a long nanomagnet driven by an electrical 
current. In the case of only DC current, the spin dynamics has a 
sophisticated bifurcation diagram of attractors. One type of attractors is 
a weak chaos. On the other hand, in the case of only AC current, the spin dynamics                              
has a rather simple bifurcation diagram of attractors. That is, for small 
Gilbert damping, when the AC current is below a critical value, the attractor 
is a limit cycle; above the critical value, the attractor is chaotic (turbulent). 
For normal Gilbert damping, the attractor is always a limit cycle in the physically 
interesting range of the AC current.
We also developed a Melnikov integral theory for a theoretical prediction on the 
occurrence of chaos. Our Melnikov prediction seems performing quite well in the DC 
case. In the AC case, our Melnikov prediction seems predicting transient chaos. The 
sustained chaotic attractor seems to have extra support from parametric resonance leading 
to a turbulent state. 
\end{abstract}

\maketitle

\tableofcontents

\section{Introduction}

The greatest potential of the theory of chaos in partial differential 
equations lies in its abundant applications in science and engineering. The 
variety of 
the specific problems demands continuing innovation of the theory 
\cite{Li03a} \cite{Li03b} \cite{Li04c} \cite{LM94} \cite{LMSW96} \cite{Li99} \cite{Li04a} \cite{Li04b}.
In these representative publications, two theories were developed. The theory 
developed in \cite{Li03a} \cite{Li03b} \cite{Li04c} involves transversal 
homoclinic orbits, and shadowing technique is used to prove the existence of 
chaos. This theory is very complete. The theory in \cite{LM94} \cite{LMSW96} 
\cite{Li99} \cite{Li04a} \cite{Li04b} deals with 
Silnikov homoclinic orbits, and geometric construction of Smale horseshoes 
is employed. This theory is not very complete. The main machineries for 
locating homoclinic orbits are 
(1). Darboux transformations, (2). Isospectral theory, (3). Persistence of 
invariant manifolds and Fenichel fibers, (4). Melnikov analysis and shooting 
technique. Overall, the two theories on chaos in partial differential equations
are results of combining Integrable Theory, Dynamical System Theory, and 
Partial Differential Equations \cite{Li04}.

In this article, we are interested in the chaotic spin dynamics in a long 
nanomagnet diven by an electrical current. We hope that the abundant spin dynamics 
revealed by this study can generate experimental studies on long nanomagets. 
To illustrate the general significance of the spin dynamics, in particular the magnetization 
reversal issue, we use a daily example: The 
memory of the hard drive of a computer. The magnetization is polarized 
along the 
direction of the external magnetic field. By reversing the external magnetic 
field, magnetization reversal can be accomplished; thereby, generating $0$ 
and $1$ binary sequence and accomplishing memory purpose. Memory capacity 
and speed via such a technique have reached their limits. The ``bit'' 
writing scheme based on such Oersted-Maxwell magnetic field (generated by 
an electrical current) encounters
fundamental problem from classical electromagnetism: the long range magnetic field
leads to unwanted writing or erasing of closely packed neighboring magnetic
elements in the extremely high density memory device and the induction laws
place an upper limit on the memory speed due to slow rise-and-decay-time
imposed by the law of induction. Discovered by 
Slonczewski \cite{Slo96} and Berger \cite{Ber96}, electrical current can 
directly apply a large 
torque to a ferromagnet. If electrical current can be directly applied to achieve 
magnetization reversal, such a technique will dramatically increase the 
memory capacity and speed of a hard drive. The magnetization can then be 
switched on the scale of nanoseconds and nanometers \cite{Wol01}. The 
industrial value will be tremendous. 
Nanomagnets driven by currents has been intensively studied 
recently \cite{Kri05,Kak05,Sla05,Gro05,Kud05,Koch,Silva,Chien,Wol01,Kis03,
Ozy03,Bass,JEW1,JEW2,Covington,LZ03,LZ04,Miltat,Zhu,Mye99}. The researches 
have gone beyond the original spin valve system \cite{Slo96} \cite{Ber96}. 
For instance, current driven torques have been applied to magnetic tunnel
junctions \cite{Sun99} \cite{Fuc04}, dilute magnetic semiconductors \cite{Yam04},
multi-magnet couplings \cite{Kak05} \cite{Kud05}. AC currents 
were also applied to generate spin torque \cite{Sla05} \cite{Gro05}. Such 
AC current can be used to generate the external magnetic field \cite{Sla05} or 
applied directly to generate spin torque \cite{Gro05}. 

Mathematically, the electrical current introduces a spin torque forcing 
term in the conventional Landau-Lifshitz-Gilbert (LLG) equation. The AC 
current can induce novel dynamics of the LLG equation, like synchronization
\cite{Sla05} \cite{Gro05} and chaos \cite{LLZ06} \cite{YZL07}. Both synchronization and 
chaos are important phenomena to understand before implementing the memory 
technology. In \cite{LLZ06} \cite{YZL07}, we studied the dynamics of synchronization and 
chaos for the LLG equation by ignoring the exchange field (i.e. LLG 
ordinary differential equations). When the nanomagnetic device has the same 
order of length along every direction, exchange field is not important, and 
we have a so-called single domain situation where the spin dynamics is 
governed by the LLG ordinary differential equations. In this article, we will 
study what we call ``long nanomagnet'' which is much longer along one direction 
than other directions. In such a situation, the exchange field will be important.
 This leads to a LLG partial differential equation. In fact,
we will study the case where the exchange field plays a dominant role.

The article is organized as follows: Section 2 presents the mathematical formulation 
of the problem. Section 3 is an integrable study on the Heisenberg equation. Based 
upon Section 3, Section 4 builds the Melnikov integral theory for predicting chaos.
Section 5 presents the numerical simulations. Section 6 is an appendix to Section 3.

\section{Mathematical Formulation}

To simplify the study, we will investigate the case that the magnetization 
depends on only one spatial variable, and has periodic boundary condition in this 
spatial variable. The application of this situation will be a large ring shape 
nanomagnet. Thus, we shall study the following forced 
Landau-Lifshitz-Gilbert (LLG) equation in the dimensionless form,
\begin{equation}
\pa_t m = - m \times H - \e \al m \times ( m \times H ) +\e (\be_1 +
\be_2 \cos \om_0 t )m \times 
( m \times e_x ) \ ,
\label{LLG}
\end{equation}
subject to the periodic boundary condition 
\begin{equation}
m(t,x+2\pi ) = m(t,x)\ ,
\label{pbc}
\end{equation}
where $m$ is a unit magnetization vector $m=(m_1, m_2, m_3)$ in which the 
three components are along ($x,y,z$) directions with unit vectors 
($e_x, e_y, e_z$), $|m|(t,x) = 1$, the effective magnetic field $H$ has 
several terms
\begin{eqnarray}
H &=& H_{\mbox{exch}} + H_{\mbox{ext}} + H_{\mbox{dem}} + H_{\mbox{ani}}
\nonumber \\
&=& \pa_x^2 m + \e  a e_x -\e m_3 e_z + \e b m_1 e_x \ , \label{EF}
\end{eqnarray}
where $H_{\mbox{exch}} = \pa_x^2 m$ is the exchange field,
$H_{\mbox{ext}} = \e a e_x$ is the external field,
$H_{\mbox{dem}} = -\e m_3 e_z$ is the demagnetization field, and
$H_{\mbox{ani}} = \e b m_1 e_x$ is the anisotropy field. For the
materials of the experimental interest, the dimensionless parameters are 
in the ranges
\begin{eqnarray}
& & a \approx 0.05\ , \quad b \approx 0.025\ , \quad
\al \approx 0.02 \ , \nonumber  \\
& & \be_1 \in [0.01, 0.3] \ , \quad \be_2 \in [0.01, 0.3]\ ; \label{PR}
\end{eqnarray}
and $\e$ is a small parameter measuring the length scale of the exchange field.
One can also add an AC current effect in the external field $H_{\mbox{ext}}$, 
but the results on the dynamics are similar.

Our goal is to build a Melnikov function for the LLG equation around 
domain walls. The roots of such a Melnikov function provide a good indication 
of chaos. 

For the rest of this section, we will introduce a few interesting 
notations. The Pauli matrices are:
\begin{equation}
\sg_1 = \left ( \begin{array}{lr} 0 & 1 \cr 1 & 0 \cr \end{array}
\right )\ , \quad 
\sg_2 = \left ( \begin{array}{lr} 0 & -i \cr i & 0 \cr \end{array}
\right )\ , \quad 
\sg_3 = \left ( \begin{array}{lr} 1 & 0 \cr 0 & -1 \cr \end{array}
\right )\ . 
\label{pms}
\end{equation}
Let 
\begin{equation}
m_+ = m_1 +i m_2\ , \quad m_- = m_1 -i m_2\ , 
\label{mpm}
\end{equation}
i.e. $\overline{m_+} = m_-$. Let
\begin{equation}
\Ga = m_j \sg_j =  \left ( \begin{array}{lr} m_3 & m_- \cr m_+ & 
-m_3 \cr \end{array}\right )\ .
\label{Gamma}
\end{equation}
Thus, $\Ga^2 = I$ (the identity matrix). Let 
\[
\hH = -H -\al m \times H + \be m \times e_x \ ,
\quad \Pi =  \left ( \begin{array}{lr} \hH_3 & \hH_1 - i \hH_2 \cr 
\hH_1 + i \hH_2 & -\hH_3 \cr \end{array}\right )\ .
\]
Then the LLG can be written in the form
\begin{equation}
i\pa_t \Ga = \frac{1}{2} [\Ga , \Pi ] \ ,
\label{efor}
\end{equation}
where $[\Ga , \Pi ] = \Ga \Pi - \Pi \Ga$.  

\section{Isospectral Integrable Theory for the Heisenberg Equation}

Setting $\e$ to zero, the LLG (\ref{LLG}) reduces to the 
Heisenberg ferromagnet equation,
\begin{equation}
\pa_t m = - m \times m_{xx}\ .
\label{HE}
\end{equation}
Using the matrix $\Ga$ introduced in (\ref{Gamma}), the Heisenberg 
equation (\ref{HE}) has the form
\begin{equation}
i\pa_t \Ga = -\frac{1}{2} [\Ga , \Ga_{xx} ] \ ,
\label{PHE}
\end{equation}
where the bracket $[\ , \ ]$ is defined in (\ref{efor}).
Obvious constants of motion of the Heisenberg equation (\ref{HE})
are the Hamiltonian,
\[
\frac{1}{2} \int_0^{2\pi } |m_x|^2 dx \ ,
\]
the momentum,
\[
\int_0^{2\pi }\frac{m_1 m_{2x} - m_2 m_{1x}}{1+m_3} dx \ ,
\]
and the total spin,
\[
\int_0^{2\pi } m dx \ .
\]
The Heisenberg equation (\ref{HE}) is an integrable system 
with the following Lax pair,
\begin{eqnarray}
\pa_x \psi &=& i \la \Ga \psi \ , \label{LP1} \\
\pa_t \psi &=& -\frac{\la }{2} \left ( 4i\la \Ga + [\Ga , \Ga_x ]
\right ) \psi \ , \label{LP2}
\end{eqnarray}
where $\psi = (\psi_1 , \psi_2 )^T$ is complex-valued, $\la$ is a complex
parameter, $\Ga$ is the matrix defined in (\ref{Gamma}), 
and $[\Ga , \Ga_x ] =\Ga \Ga_x - \Ga_x \Ga$. In fact, there is a connection 
between the Heisenberg equation (\ref{HE}) and the 1D integrable
focusing cubic nonlinear Schr\"odinger (NLS) equation via a nontrivial gauge
transformation. The details of this connection are given in the Appendix.

\subsection{A Simple Linear Stability Calculation}

As shown in the Appendix, the temporally periodic solutions of the 
NLS equation correspond to the domain walls of the Heisenberg equation.
Consider the domain wall
\[
\Ga_0 = \left (\begin{array}{lr} 0 & e^{-i\xi x} \cr  e^{i\xi x} & 0
\cr \end{array} \right ), \ \xi \in \mathbb{Z}; \quad 
\mbox{i.e.}\ m_1 =\cos \xi x, \ m_2 =\sin \xi x, \ m_3 =0,
\]
which is a fixed point of the Heisenberg equation. Linearizing 
the Heisenberg equation at this fixed point, one gets
\[
i\pa_t \Ga = -\frac{1}{2} [\Ga_0 , \Ga_{xx}] -\frac{1}{2} 
[\Ga , \Ga_{0xx}] \ .
\]
Let 
\[
\Ga = \left (\begin{array}{lr} m_3 & e^{-i\xi x}(m_1-im_2) \cr
e^{i\xi x} (m_1+im_2)& -m_3 \cr \end{array} \right ) \ ,
\]
we get
\[
\pa_t m_1 = 0 \ , \quad \pa_t m_2 = m_{3xx} +\xi^2 m_3 \ , \quad
\pa_t m_3 = -m_{2xx} -2 \xi  m_{1x}\ .
\]
Let
\[
m_j = \sum_{k=0}^{\infty}(m_{jk}^+(t) \cos kx + m_{jk}^-(t) \sin kx )\ ,
\]
where $m_{jk}^{\pm}(t) = c_{jk}^{\pm} e^{\Om t}$, $c_{jk}^{\pm}$ and $\Om$ are constants.
We obtain that
\begin{equation}
\Om = \sqrt{k^2 (\xi^2 - k^2 )}
\label{icd}
\end{equation}
which shows that only the modes $0 < |k| < |\xi |$ are unstable. Such
instability is called a modulational instability, also called
a side-band instability. Comparing the Heisenberg ferromagnet equation
(\ref{HE}) and the Landau-Lifshitz-Gilbert equation (\ref{LLG}), we see
that if we drop the exchange field $H_{\mbox{exch}} = \pa_x^2 m$ in
the effective magnetic field $H$ (\ref{EF}), such a modulational instability
will disappear, and the Landau-Lifshitz-Gilbert equation (\ref{LLG}) reduces
to a system of three ordinary differential equations, which has no chaos as
verified numerically. Thus the modulational instability is the source of the
chaotic magnetization dynamics.

In terms of $m_{jk}^\pm$, we have
\[
\frac{d}{dt} m_{1k}^\pm = 0, \ 
\frac{d}{dt} m_{2k}^\pm = (\xi^2 -k^2) m_{3k}^\pm,  \ 
\frac{d}{dt} m_{3k}^\pm = k^2 m_{2k}^\pm \mp 2\xi k m_{1k}^\mp .
\]
Choosing $\xi =2$, we have for $k=0$,
\[
\left ( \begin{array}{c} m_{10}^\mp \cr  m_{20}^\pm \cr m_{30}^\pm \cr
\end{array}\right ) = c_1 
\left ( \begin{array}{c} 1 \cr 0 \cr 0 \cr \end{array}\right )
+ c_2 \left ( \begin{array}{c} 0 \cr 1 \cr 0 \cr \end{array}\right )
+ c_3 \left ( \begin{array}{c} 0 \cr 4t \cr 1 \cr \end{array}\right )\ ;
\]
for $k=1$,
\[
\left ( \begin{array}{c} m_{11}^\mp \cr  m_{21}^\pm \cr m_{31}^\pm \cr
\end{array}\right ) = c_1 
\left ( \begin{array}{c} 1 \cr \pm 4 \cr 0 \cr \end{array}\right )
+ c_2 \left ( \begin{array}{c} 0 \cr \sqrt{3} \cr 1 \cr \end{array}\right )
e^{\sqrt{3} t}+ c_3 \left ( \begin{array}{c} 0 \cr -\sqrt{3} \cr 1 \cr \end{array}\right )e^{-\sqrt{3} t}\ ;
\]
for $k=2$,
\[
\left ( \begin{array}{c} m_{12}^\mp \cr  m_{22}^\pm \cr m_{32}^\pm \cr
\end{array}\right ) = c_1 
\left ( \begin{array}{c} 0 \cr 0 \cr 1 \cr \end{array}\right )
+ c_2 \left ( \begin{array}{c} 1 \cr 0 \cr \mp 8t \cr \end{array}\right )
+ c_3 \left ( \begin{array}{c} 0 \cr 1 \cr 4t \cr \end{array}\right )\ ;
\]
for $k>2$,
\begin{eqnarray*}
\left ( \begin{array}{c} m_{1k}^\mp \cr  m_{2k}^\pm \cr m_{3k}^\pm \cr
\end{array}\right ) &=& c_1 
\left ( \begin{array}{c} 1 \cr \pm 4/k \cr 0 \cr \end{array}\right )
+ c_2 \left ( \begin{array}{c} 0 \cr \sqrt{k^2-4}\cos (k\sqrt{k^2-4} t)
  \cr k  \sin (k\sqrt{k^2-4} t) \cr \end{array}\right ) \\
& & + c_3 \left ( \begin{array}{c} 0 \cr -\sqrt{k^2-4}\sin (k\sqrt{k^2-4} t) 
  \cr k \cos (k\sqrt{k^2-4} t) \cr \end{array}\right )\ ;
\end{eqnarray*}
where $c_1$, $c_2$ and $c_3$ are arbitrary constants.

The nonlinear foliation of the above linear modulational instability can be 
established via a Darboux transformation.

\subsection{A Darboux Transformation}

A Darboux transformation for (\ref{LP1})-(\ref{LP2}) can be obtained.
\begin{theorem}
Let $\phi = (\phi_1, \phi_2)^T$ be a solution to the Lax pair
(\ref{LP1})-(\ref{LP2}) at ($\Ga , \nu$). Define the matrix
\[
G = N \left (\begin{array}{cc} (\nu -\la )/\nu  & 0 \cr 0 &
(\bar{\nu} -\la )/\bar{\nu} \cr \end{array} \right ) N^{-1} \ ,
\]
where
\[
N = \left (\begin{array}{lr} \phi_1 & -\overline{\phi_2} \cr \phi_2 &
\overline{\phi_1}  \cr \end{array} \right )\ .
\]
Then if $\psi$ solves the Lax pair (\ref{LP1})-(\ref{LP2}) at ($\Ga , \la$),
\begin{equation}
\hat{\psi} = G \psi \label{DT1}
\end{equation}
solves the Lax pair (\ref{LP1})-(\ref{LP2}) at ($\hat{\Ga} , \la$), where
$\hat{\Ga}$ is given by
\begin{equation}
\hat{\Ga} = N \left (\begin{array}{lr} e^{-i\th }  & 0 \cr 0 &
e^{i\th }\cr \end{array} \right ) N^{-1} \Ga N
\left (\begin{array}{lr} e^{i\th }  & 0 \cr 0 &
e^{-i\th }\cr \end{array} \right ) N^{-1} \ ,
\label{DT2}
\end{equation}
where $e^{i\th }=\nu /|\nu |$.
\end{theorem}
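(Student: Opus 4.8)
The plan is to verify directly the two operator identities that are equivalent to $\hat\psi=G\psi$ solving the Lax pair at $(\hat\Ga,\la)$. Differentiating $\hat\psi=G\psi$ and substituting (\ref{LP1})--(\ref{LP2}), these are
\[
G_x = i\la(\hat\Ga\, G - G\,\Ga),\qquad
G_t = V(\hat\Ga,\la)\,G - G\,V(\Ga,\la),
\]
where $V(\Ga,\la)=-\tfrac{\la}{2}\bigl(4i\la\Ga+[\Ga,\Ga_x]\bigr)$ is the matrix in (\ref{LP2}); the content of the theorem is that the stated $\hat\Ga$ makes both hold. The first thing I would record is the equation satisfied by $N$ itself. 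Its first column is $\phi$, solving the pair at $\nu$. For the second column $(-\overline{\phi_2},\overline{\phi_1})^T = J\bar\phi$, with $J=\bigl(\begin{smallmatrix}0&-1\\1&0\end{smallmatrix}\bigr)$, the key structural identity is $J\bar\Ga J^{-1}=-\Ga$, which holds because $m_3$ is real and $\overline{m_+}=m_-$. Conjugating the complex conjugate of (\ref{LP1})--(\ref{LP2}) by $J$ then shows $J\bar\phi$ solves the same pair at $\bnu$, so with $\La=\mathrm{diag}(\nu,\bnu)$ one gets $N_x = i\Ga N\La$ and $N_t = -\tfrac12\bigl(4i\Ga N\La^2+[\Ga,\Ga_x]N\La\bigr)$.

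Next I would treat the spatial identity. Writing $P=N\La^{-1}N^{-1}$ and $Q=P^{-1}=N\La N^{-1}$, the matrix becomes $G=I-\la P=P(Q-\la I)$. Differentiating and using $N_x=i\Ga N\La$ (the diagonal factor has zero derivative), the $N^{-1}N$ cancellations give $G_x=i[\Ga,G]Q=-i\la\Ga+i\la P\Ga Q$. Setting $\hat\Ga:=P\Ga P^{-1}=P\Ga Q$ and using $PQ=I$, one checks $\hat\Ga G-G\Ga=P\Ga Q-\Ga$, so $i\la(\hat\Ga G-G\Ga)=G_x$ exactly. This verifies the spatial identity and pins down $\hat\Ga=P\Ga P^{-1}$.

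I would then reconcile this with the stated formula and check legitimacy. With $M:=N^{-1}\Ga N$ one has $\hat\Ga=N(\La^{-1}M\La)N^{-1}$, while the claimed expression is $N(EME^{-1})N^{-1}$ with $E=\mathrm{diag}(e^{-i\th},e^{i\th})$. Both conjugations only rescale the off-diagonal entries of $M$, by $\bnu/\nu,\ \nu/\bnu$ versus $e^{\mp2i\th}$, and these agree precisely because $e^{i\th}=\nu/|\nu|$ forces $e^{2i\th}=\nu/\bnu$. Moreover the special form of $N$ gives $N^{\dagger}N=(|\phi_1|^2+|\phi_2|^2)I$, so $N$ is a scalar multiple of a unitary matrix and $S=NEN^{-1}$ is unitary; conjugation by $S$ preserves the Hermitian and trace structure of $\Ga$, confirming that $\hat\Ga$ is again a genuine magnetization matrix.

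The temporal identity is the laborious step and the main obstacle. Here $G_t=-\la P_t$ is linear in $\la$, whereas $V(\hat\Ga,\la)G-G V(\Ga,\la)$ is a cubic polynomial in $\la$, so matching coefficients is the crux. Computing $P_t$ from the $N_t$ equation and expanding in powers of $\la$, the $\la^3$ coefficient is $2i(\hat\Ga P-P\Ga)$, which vanishes exactly because $\hat\Ga=P\Ga P^{-1}$ (a reassuring consistency check); the $\la^2$ coefficient yields an identity that must be checked, encoding $[\hat\Ga,\hat\Ga_x]$; and the $\la^1$ coefficient must reproduce $-P_t$. The bookkeeping with $\hat\Ga_x=P_x\Ga P^{-1}+P\Ga_xP^{-1}+P\Ga(P^{-1})_x$ and the commutator $[\hat\Ga,\hat\Ga_x]$ is where the real work lies. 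A cleaner alternative I would keep in reserve is to invoke gauge covariance of the zero-curvature representation: once the spatial identity gives $\hat U=i\la\hat\Ga$ and $(U,V)$ is known to have zero curvature, the transformed pair inherits zero curvature, and the polynomial-in-$\la$ structure of the hierarchy forces $\hat V=V(\hat\Ga,\la)$, bypassing the term-by-term match.
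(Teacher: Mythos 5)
The paper offers no proof of this theorem: immediately after stating it, it says only that the result ``can be proved either through the connection between the Heisenberg equation and the NLS equation (with a well-known Darboux transformation) \cite{Cal95}, or through a direct calculation.'' Your proposal is that direct calculation carried out in earnest, so there is no paper argument to compare step-by-step; the question is whether your verification is sound, and the parts you execute are. The symmetry $J\bar{\Ga}J^{-1}=-\Ga$ (valid because $m_3$ is real and $\overline{m_+}=m_-$) legitimately yields $N_x=i\Ga N\La$ and $N_t=-\tfrac12(4i\Ga N\La^2+[\Ga,\Ga_x]N\La)$ with $\La=\mathrm{diag}(\nu,\bnu)$; the spatial identity $G_x=i\la(\hat{\Ga}G-G\Ga)$ with $\hat{\Ga}=P\Ga P^{-1}$, $P=N\La^{-1}N^{-1}$, is checked correctly; the reconciliation of $P\Ga P^{-1}$ with (\ref{DT2}) via $e^{2i\th}=\nu/\bnu$ is right; and the observation that $N^{\dagger}N=(|\phi_1|^2+|\phi_2|^2)I$ keeps $\hat{\Ga}$ Hermitian, traceless and involutive is a worthwhile check the paper never records. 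Your route also has the virtue of being self-contained, whereas the citation route would pull in the NLS gauge machinery of the paper's Appendix.

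The one gap is that you never carry out the temporal identity, and your reserve plan for avoiding it does not work as stated. Taking the reserve plan first: zero curvature together with $\hat{U}=i\la\hat{\Ga}$ does \emph{not} force $\hat{V}=-\tfrac{\la}{2}(4i\la\hat{\Ga}+[\hat{\Ga},\hat{\Ga}_x])$, because the gauge-transformed matrix $\hat{V}=(G_t+GV)G^{-1}$ is a priori only rational in $\la$, with candidate poles at $\la=\nu,\bnu$ where $\det G=(\nu-\la)(\bnu-\la)/|\nu|^2$ vanishes; one would have to prove those singularities removable and then identify the polynomial from its large-$\la$ asymptotics, which is comparable work to the direct check. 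Fortunately the direct check closes in a few lines from relations you already derived, so you should simply finish it: differentiating $\hat{\Ga}P=P\Ga$ and using $P_x=i(\Ga-\hat{\Ga})$ together with $\Ga^2=\hat{\Ga}^2=I$ gives $\hat{\Ga}_xP=2i(I-\hat{\Ga}\Ga)+P\Ga_x$, hence
\[
[\hat{\Ga},\hat{\Ga}_x]P-P[\Ga,\Ga_x]=4i(\hat{\Ga}-\Ga)\,,
\]
which is exactly your $\la^2$ coefficient identity. Right-multiplying by $Q=P^{-1}$ and comparing with
\[
P_t=2i(\hat{\Ga}-\Ga)Q-\tfrac12[\Ga,\Ga_x]+\tfrac12 P[\Ga,\Ga_x]Q\,,
\]
computed from the $N_t$ equation, gives the $\la^1$ identity $P_t=\tfrac12\bigl([\hat{\Ga},\hat{\Ga}_x]-[\Ga,\Ga_x]\bigr)$, and the $\la^0$ terms vanish identically on both sides. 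With those two lines added your proof is complete; as written, it is a correct plan whose hardest step is still open, resting on a fallback that itself has a hole.
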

The transformation (\ref{DT1})-(\ref{DT2}) is called a Darboux transformation.
This theorem can be proved either through the connection between the 
Heisenberg equation and the NLS equation (with a well-known Darboux 
transformation) \cite{Cal95}, or through a direct calculation. 

Notice also that $\hat{\Ga}^2 = I$. Let
\begin{eqnarray}
& & \left (\begin{array}{lr} \Phi_1 & -\overline{\Phi_2} \cr \Phi_2 &
\overline{\Phi_1}  \cr \end{array} \right ) = N \left (
\begin{array}{lr} e^{-i\th }  & 0 \cr 0 &
e^{i\th }\cr \end{array} \right ) N^{-1}  \nonumber \\
& & = \frac{1}{|\phi_1|^2 +|\phi_2|^2} \left (\begin{array}{lr} 
e^{-i\th } |\phi_1|^2 + e^{i\th }|\phi_2|^2  & -2i\sin \th \ \phi_1 
\overline{\phi_2} \cr -2i\sin \th \ \overline{\phi_1}\phi_2 &
e^{i\th } |\phi_1|^2 + e^{-i\th }|\phi_2|^2  \cr \end{array} \right )\ .
\label{CPH}
\end{eqnarray}
Then
\begin{equation}
\hat{\Ga} = \left (\begin{array}{lr} \hm_3 & \hm_1 -i \hm_2 \cr 
\hm_1 +i \hm_2 & -\hm_3  \cr \end{array} \right ) \ ,
\label{hme}
\end{equation}
where
\begin{eqnarray*}
& & \hm_+ = \hm_1 +i \hm_2 = \overline{\Phi_1}^2 (m_1 + i m_2)
-\Phi_2^2 (m_1 - i m_2) + 2 \overline{\Phi_1} \Phi_2 m_3 \ , \\
& & \hm_3 = \left ( |\Phi_1|^2-|\Phi_2|^2 \right ) m_3 - 
\overline{\Phi_1} \overline{\Phi_2}(m_1 + i m_2) -
\Phi_1 \Phi_2 (m_1 - i m_2) \ .
\end{eqnarray*}
One can generate the figure eight connecting to the domain wall, as 
the nonlinear foliation of the modulational instability, via the above 
Darboux transfomation.

\subsection{Figure Eight Connecting to the Domain Wall \label{FEDW}}

Let $\Ga$ be the domain wall
\[
\Ga = \left (\begin{array}{lr} 0 & e^{-i2 x} \cr  e^{i2 x} & 0
\cr \end{array} \right )\ ,
\]
i.e. $m_1 =\cos 2x$, $m_2 =\sin 2x$, and $m_3 = 0$.
Solving the Lax pair (\ref{LP1})-(\ref{LP2}), one gets two Bloch 
eigenfunctions
\begin{equation}
\psi = e^{\Om t} \left ( \begin{array}{c} 2\la \exp \{ \frac{i}{2} 
(k -2) x \}  \cr (k-2) \exp \{ \frac{i}{2} (k +2) x \} \cr \end{array}
\right ) \ , \quad \Om = -i\la k \ , \quad k = \pm 2\sqrt{1+\la^2}\ .
\label{fbf}
\end{equation}
To apply the Darboux transformation (\ref{DT2}), we start with the 
two Bloch functions with $k = \pm 1$,
\begin{eqnarray}
\phi^+ &=& \left ( \begin{array}{c} \sqrt{3} e^{-ix} \cr ie^{ix} \cr 
\end{array}
\right )\exp \left \{ \frac{\sqrt{3}}{2} t +i \frac{1}{2} x \right \} \ , 
\non \\ \label{fbpp} \\
\phi^- &=& \left ( \begin{array}{c} -i e^{-ix} \cr \sqrt{3}e^{ix} \cr 
\end{array}
\right )\exp \left \{ -\frac{\sqrt{3}}{2} t -i \frac{1}{2} x \right \} \ .
\non 
\end{eqnarray}
The wise choice for $\phi$ used in (\ref{DT2}) is:
\begin{equation}
\phi = \sqrt{\frac{c^+}{c^-}} \phi^+ + \sqrt{\frac{c^-}{c^+}} \phi^-
= \left ( \begin{array}{c} \left ( \sqrt{3} e^{\tau +i \chi } -
i e^{-\tau -i \chi } \right )e^{-ix} \cr 
\left ( i e^{\tau +i \chi } +\sqrt{3}
e^{-\tau -i \chi } \right )e^{ix} \cr \end{array} \right )\ ,
\label{fbp}
\end{equation}
where $c^+ / c^- = \exp \{ \sg + i \ga \}$, $\tau = \frac{1}{2}
(\sqrt{3} t + \sg )$, and $\chi  = \frac{1}{2} ( x + \ga )$. 
Then from the Darboux transformation (\ref{DT2}), one gets
\begin{eqnarray}
\hat{m}_1 + i \hat{m}_2 &=& -e^{i2x} \bigg \{ 1- \frac{2\ \mbox{sech}2\tau
\cos 2\chi}{(2-\sqrt{3} \ \mbox{sech}2\tau \sin 2\chi )^2}\bigg [ 
\ \mbox{sech}2\tau \cos 2\chi \non \\
& & + i \left (\sqrt{3} - 2\ \mbox{sech}2\tau \sin 2\chi \right ) 
\bigg ] \bigg \} \ , \label{F81} \\
\hat{m}_3 &=& \frac{ 2\ \mbox{sech}2\tau \tanh 2\tau
\cos 2\chi}{(2-\sqrt{3} \ \mbox{sech}2\tau \sin 2\chi )^2} \ . \label{F82}
\end{eqnarray}
As $t \ra \pm \infty$,
\[
\hat{m}_1 \ra -\cos 2x \ , \quad
\hat{m}_2 \ra -\sin 2x \ , \quad
\hat{m}_3 \ra 0 \ .
\]
The expressions (\ref{F81})-(\ref{F82}) represent the two dimensional 
figure eight separatrix connecting to the domain wall ($m_+=-e^{i2x}$,
$m_3 =0$), parametrized by $\sg$ and $\ga$. See Figure \ref{F8} for an 
illustration. Choosing $\ga =0, \pi$, one gets the figure eight curve 
section of Figure \ref{F8}. The spatial-temporal profiles corresponding 
to the two lobes of the figure eight curve are shown in Figure \ref{lobe}.
In fact, the two profiles corresponding the two lobes are spatial translates 
of each other by $\pi$. Inside one of the lobe, the spatial-temporal profile 
is shown in Figure \ref{lobe-in-out}(a). Outside the figure eight curve, the 
spatial-temporal profile is shown in Figure \ref{lobe-in-out}(b). Here the 
inside and outside spatial-temporal profiles are calculated by using the 
integrable finite difference discretization \cite{KS05} of the Heisenberg 
equation (\ref{HE}),
\begin{equation}
\frac{d}{dt} m(j) = - \frac{2}{h^2} m(j) \times 
\left ( \frac{m(j+1)}{1+m(j) \cdot m(j+1)} + 
\frac{m(j-1)}{1+m(j-1) \cdot m(j)} \right ),
\label{HEID}
\end{equation}
where $m(j) = m(t, jh)$, $j=1, \cdots , N$, $Nh = 2\pi$, and $h$ is the 
spatial mesh size. For the computation of Figure \ref{lobe-in-out}, we choose 
$N=128$. 
\begin{figure}[ht] %[t]
\centering
\vspace{0.5in}
$$
\psellipse(0,0)(1,.75)
\parabola(0,0)(.5,.25)
\parabola(0,0)(-.5,.25)
\parabola[linestyle=dashed](0,0)(-.5,-.25)
\parabola[linestyle=dashed](0,0)(.5,-.25)
\rput(0,.9){\gamma}
\rput(.125,.25){\sigma}
$$
\vspace{0.2in}
\caption{The separatrix connecting to the domain wall 
$m_+ = -e^{i2x}, m_3 =0$.}
\label{F8}
\end{figure}

\begin{figure}[ht] %[t]
\centering
\subfigure[$\ga = 0$]{\includegraphics[width=2.3in,height=2.3in]{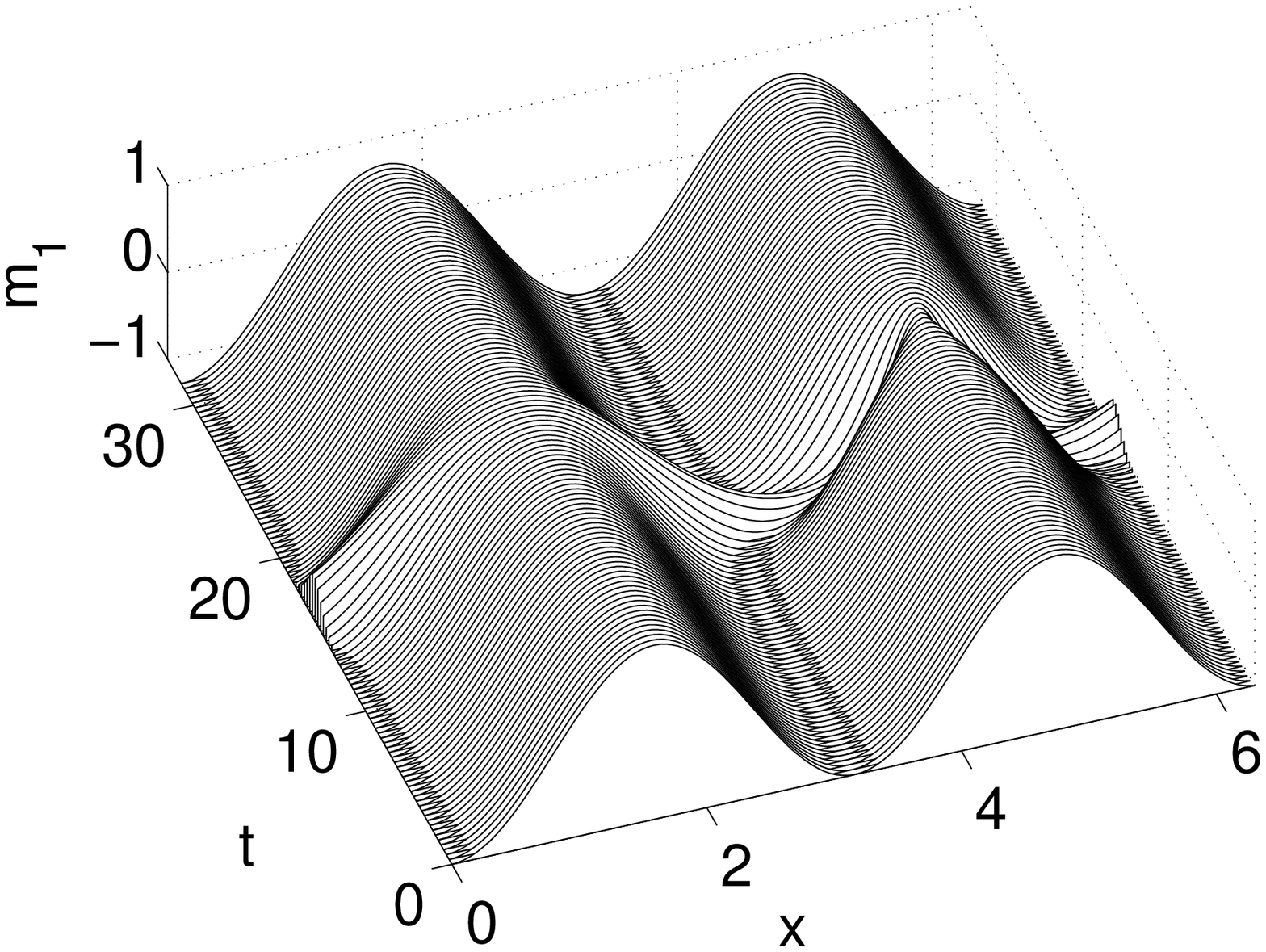}}
\subfigure[$\ga = \pi$]{\includegraphics[width=2.3in,height=2.3in]{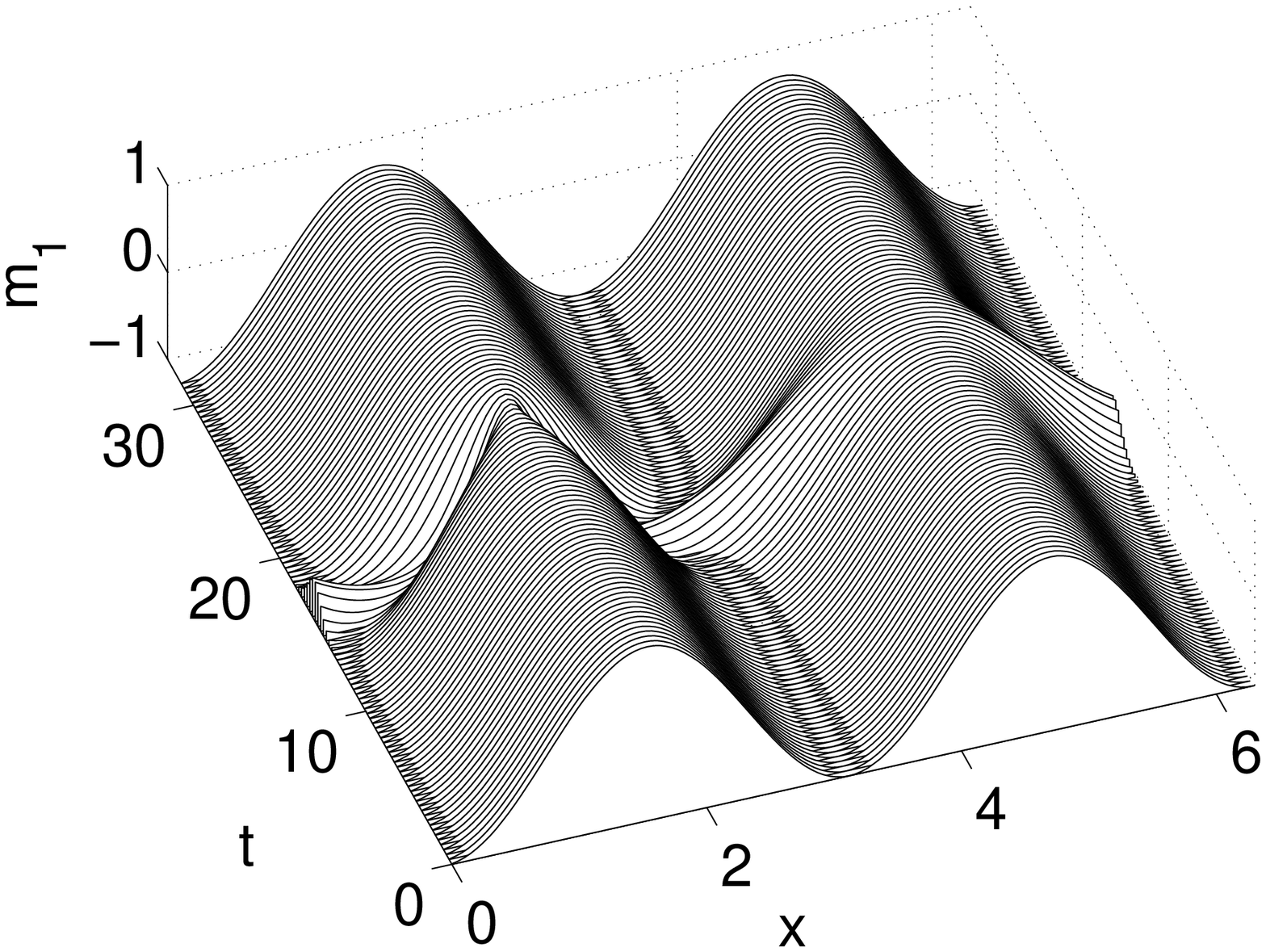}}
\caption{The spatial-temporal profiles corresponding 
to the two lobes of the figure eight curve.}
\label{lobe}
\end{figure}

\begin{figure}[ht] %[t]
\centering
\subfigure[inside]{\includegraphics[width=2.3in,height=2.3in]{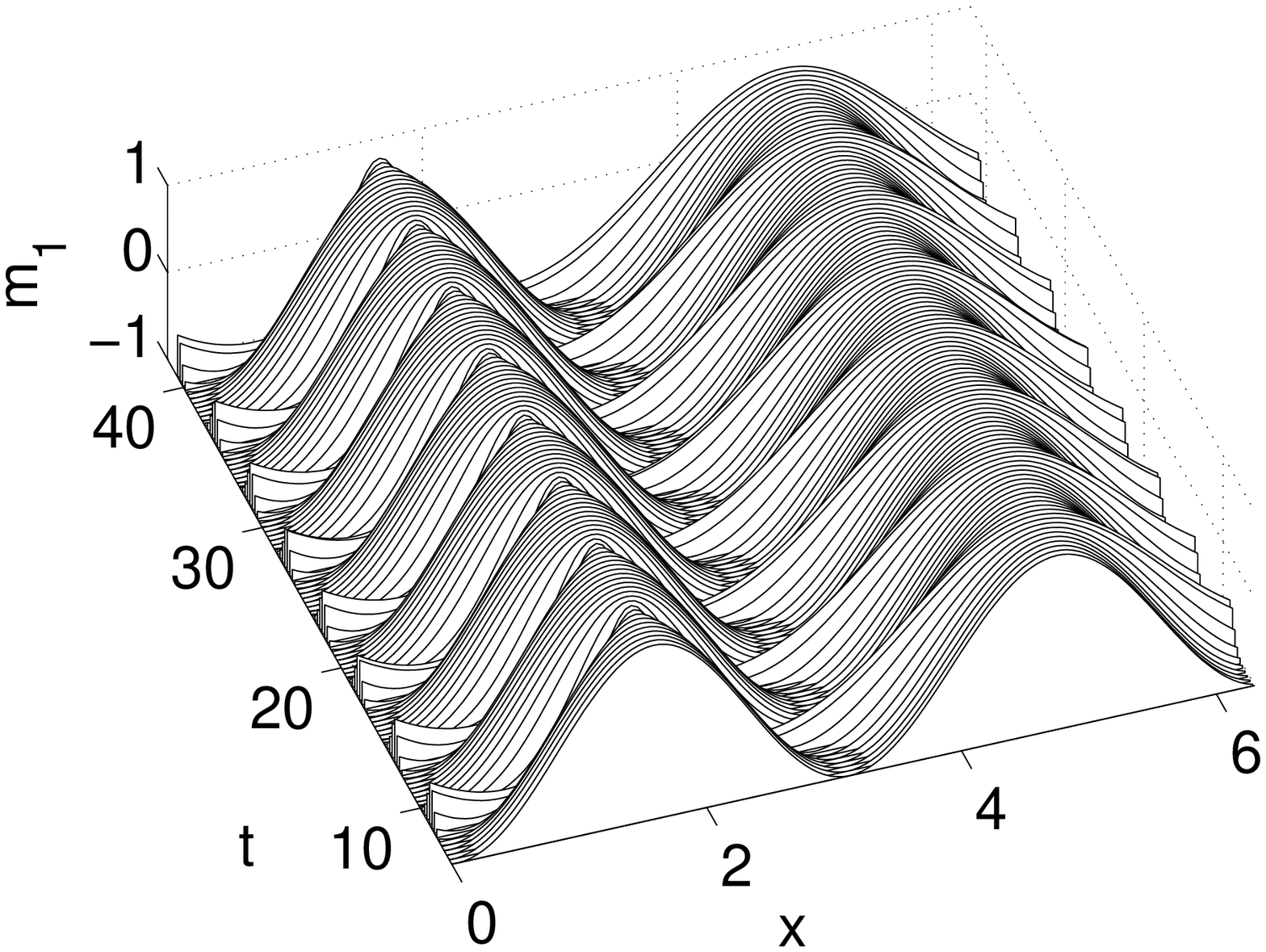}}
\subfigure[outside]{\includegraphics[width=2.3in,height=2.3in]{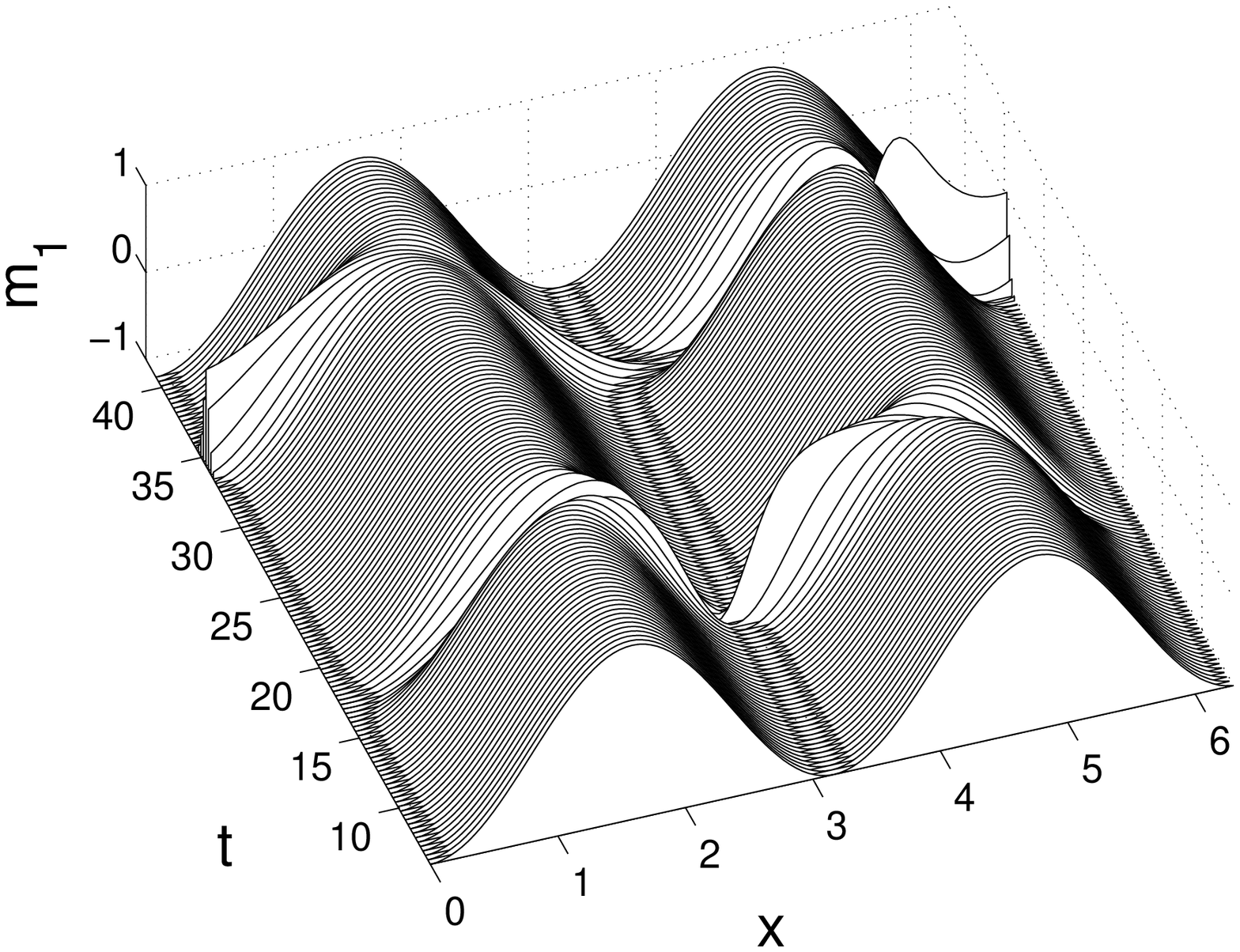}}
\caption{The spatial-temporal profiles corresponding 
to the inside and outside of the figure eight curve.}
\label{lobe-in-out}
\end{figure}

By a translation $x \ra x + \th$, one can generate a circle
of domain walls:
\[
m_+=-e^{i2(x+\th )}\ , \quad m_3 =0 \ ,
\]
where $\th$ is the phase parameter. The three dimensional figure eight 
separatrix connecting to the circle of domain walls, parametrized by 
$\sg$, $\ga$ and $\th$; is illustrated in Figure \ref{CF8}. 

In general, the unimodal equilibrium manifold can be sought as follows:
Let 
\[
m_j = c_j \cos 2x + s_j \sin 2x \ , \quad j=1,2,3,
\]
then the uni-length condition $|m|(x) =1$ leads to 
\[
|c| = 1\ , \quad |s| =1\ , \quad c \cdot s = 0 \ ,
\]
where $c$ and $s$ are the two vectors with components $c_j$ and $s_j$. 
Thus the unimodal equilibrium manifold is three dimensional and can be 
represented as in Figure \ref{EM}. 

Using the formulae (\ref{F81})-(\ref{F82}), we want to build a Melnikov
integral. The zeros of the Melnikov integral will give a prediction
on the existence of chaos. To build such a Melnikov integral, we need to 
first develop a Melnikov vector. This requires Floquet theory of (\ref{LP1}).

\begin{figure}
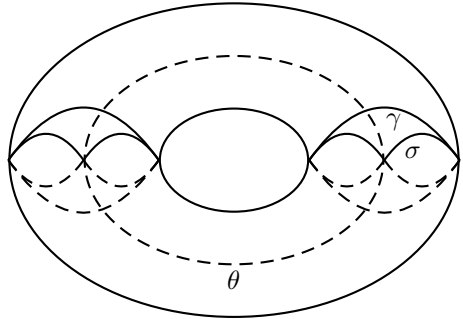

\centering
\vspace{0.7in}
$$
\psellipse(0,0)(3,2.1)
\psellipse[linestyle=dashed](0,0)(2,1.4)
\psellipse(0,0)(1,.7)
\parabola(-1,0)(-2,.7)
\parabola[linestyle=dashed](-1,0)(-2,-.7)
\parabola(-1,0)(-1.5,.35)
\parabola(-2,0)(-2.5,.35)
\parabola[linestyle=dashed](-1,0)(-1.5,-.35)
\parabola[linestyle=dashed](-2,0)(-2.5,-.35)
\parabola(1,0)(2,.7)
\parabola[linestyle=dashed](1,0)(2,-.7)
\parabola(1,0)(1.5,.35)
\parabola(2,0)(2.5,.35)
\parabola[linestyle=dashed](1,0)(1.5,-.35)
\parabola[linestyle=dashed](2,0)(2.5,-.35)
\rput(2.125,.5){\gamma}
\rput(2.375,.1){\sigma}
\rput(0,-1.6){\theta}
$$
\vspace{0.6in}
\caption{The separatrix connecting to the circle of domain walls 
$m_+ = e^{i2(x + \th )}, m_3 =0$.}
\label{CF8}
\end{figure}

\begin{figure}
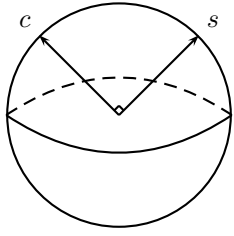

\centering
\vspace{0.5in}
$$
\pscircle(0,0){1.5}
\parabola(-1.5,0)(0,-.5)
\parabola[linestyle=dashed](-1.5,0)(0,.5)
\psline{->}(0,0)(-1.0625,1.0625)
\psline{->}(0,0)(1.0625,1.0625)
\psline(-.0625,.0625)(0,.125)
\psline(0,.125)(.0625,.0625)
\rput(1.25,1.25){s}
\rput(-1.25,1.25){c}
$$
\vspace{0.5in}
\caption{A representation of the 3 dimensional unimodal equilibrium manifold.}
\label{EM}
\end{figure}

\subsection{Floquet Theory}

Focusing on the spatial part (\ref{LP1}) of the Lax pair 
(\ref{LP1})-(\ref{LP2}), let $Y(x)$ be the fundamental matrix solution
of (\ref{LP1}), $Y(0)=I$ ($2\times 2$ identity matrix), then the Floquet 
discriminant is defined by 
\[
\Dl = \ \mbox{trace} \ Y(2\pi)\ .
\]
The Floquet spectrum is given by
\[
\sg = \{ \la \in \mathbb{C}\ | \ -2 \leq \Dl(\la) \leq 2 \} \ .
\]
Periodic and anti-periodic points $\la^{\pm}$ (which correspond to 
periodic and anti-periodic eigenfunctions respectively) are defined by
\[
\Dl(\la^{\pm}) = \pm 2 \ .
\]
A critical point $\la^{(c)}$ is defined by
\[
\frac{d\Dl}{d\la}(\la^{(c)}) = 0 \ .
\]
A multiple point $\la^{(n)}$ is a periodic or anti-periodic point which 
is also a critical point. The algebraic multiplicity of $\la^{(n)}$ is 
defined as the order of the zero of $\Dl(\la)\pm 2$ at $\la^{(n)}$. When 
the order is $2$, we call the multiple point a double point, and denote 
it by $\la^{(d)}$. The order can exceed $2$. The geometric multiplicity 
of $\la^{(n)}$ is defined as the dimension of the periodic or 
anti-periodic eigenspace at $\la^{(n)}$, and is either $1$ or $2$.   

Counting lemmas for $\la^{\pm}$ and $\la^{(c)}$ can be established as 
in \cite{Mit04} \cite{LM94}, which lead to the existence of the sequences 
$\{ \la_j^{\pm} \}$ and $\{ \la_j^{(c)} \}$ and their approximate locations.
Nevertheless, counting lemmas are not necessary here. For any $\la \in 
\mathbb{C}$, $\Dl (\la )$ is a constant of motion of the Heisenberg equation
(\ref{HE}). This is the so-called isospectral theory. 

\begin{example}
For the domain wall $m_1 =\cos 2x$, $m_2 =\sin 2x$, and $m_3 = 0$; the 
two Bloch eigenfunctions are given in (\ref{fbf}). The 
Floquet discriminant is given by
\[
\Dl = 2 \cos \left [ 2 \pi \sqrt{1 + \la^2} \right ]\ .
\]
The periodic points are given by 
\[
\la = \pm \sqrt{\frac{n^2}{4} -1}\ , \quad n \in \mathbb{Z}\ , 
\quad n \ \mbox{is even} \ .
\]
The anti-periodic points are given by 
\[
\la = \pm \sqrt{\frac{n^2}{4} -1}\ , \quad n \in \mathbb{Z}\ , 
\quad n \ \mbox{is odd} \ .
\]
The choice of $\phi^+$ and $\phi^-$ correspond to $n = \pm 1$ and 
$\la = \nu = i \sqrt{3}/2$ with $k=\pm 1$. 
\[
\Dl' = -4\pi \frac{\la}{\sqrt{1 + \la^2}}  \sin \left [ 2 \pi 
\sqrt{1 + \la^2} \right ]\ .
\]
\[
\Dl'' =-4\pi (1 + \la^2)^{-3/2} \sin \left [ 2 \pi 
\sqrt{1 + \la^2} \right ] - 8 \pi^2 \frac{\la^2}{1 + \la^2}
\cos \left [ 2 \pi \sqrt{1 + \la^2} \right ]\ .
\]
When $n=0$, i.e. $\sqrt{1 + \la^2} =0$, by L'Hospital's rule
\[
\Dl' \ra - 8 \pi^2 \la \ , \quad \la = \pm i \ .
\]
That is, $\la = \pm i$ are periodic points, not critical points.
When $n=\pm 1$, we have two imaginary double points
\[
\la = \pm i \sqrt{3}/2 \ .
\]
When $n=\pm 2$, $\la =0$ is a multiple point of order $4$. The rest  
periodic and anti-periodic points are all real double points.
Figure \ref{FS} is an illustration of these spectral points. 
\label{exf}
\end{example}

\begin{figure}
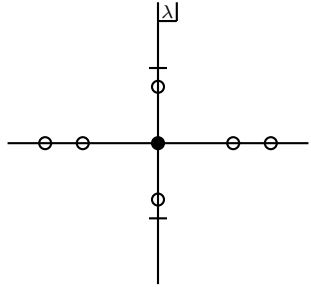

\centering
\vspace{0.6in}
$$
\psaxes[labels=none,ticks=y](0,0)(-2,-1.875)(2,1.875)
\psline(0,1.625)(.25,1.625)
\psline(.25,1.625)(.25,1.875)
\pscircle[fillcolor=black,fillstyle=solid](0,0){.09375}
\pscircle(0,.75){.09375}
\pscircle(0,-.75){.09375}
\pscircle(1,0){.09375}
\pscircle(-1,0){.09375}
\pscircle(1.5,0){.09375}
\pscircle(-1.5,0){.09375}
\rput(.125,1.75){\scriptstyle{\lambda}}
$$
\vspace{0.5in}
\caption{The periodic and anti-periodic points corresponding to 
the potential of domain wall $m_+ = e^{i2x}, m_3 =0$. The open 
circles are double points, the solid circle at the origin is a 
multiple point of order $4$, and the two bars intersect the imaginary axis 
at two periodic points which are not critical points.}
\label{FS}
\end{figure}

\subsection{Melnikov Vectors}

Starting from the Floquet theory, one can build Melnikov vectors.
\begin{definition}
An important sequence of invariants $F_j$ of the Heisenberg equation 
is defined by
\[
F_j(m)= \Dl(\la^{(c)}_j(m),m)\ .
\]
\end{definition}
\begin{lemma}
If $\{ \la^{(c)}_j \}$ is a simple critical point of $\Dl$, then
\[
\frac{\pa F_j}{\pa m} = \frac{\pa \Dl}{\pa m}\bigg |_{\la =\la^{(c)}_j}\ . 
\]
\end{lemma}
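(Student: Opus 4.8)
The plan is to exploit the fact that $F_j$ carries $m$-dependence through two separate channels: the \emph{explicit} dependence of the discriminant $\Dl(\la,m)$ on the potential $m$, and the \emph{implicit} dependence through the critical point $\la^{(c)}_j=\la^{(c)}_j(m)$ at which $\Dl$ is evaluated. Writing $F_j(m)=\Dl(\la^{(c)}_j(m),m)$ and differentiating with respect to $m$ by the chain rule gives
\[
\frac{\pa F_j}{\pa m}=\frac{\pa \Dl}{\pa m}\bigg|_{\la=\la^{(c)}_j}+\frac{d\Dl}{d\la}\bigg|_{\la=\la^{(c)}_j}\frac{\pa \la^{(c)}_j}{\pa m}\ .
\]
The assertion then reduces to showing that the second term drops out. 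But this is immediate from the defining property of a critical point, $\frac{d\Dl}{d\la}(\la^{(c)}_j)=0$, so the spectral-derivative factor annihilates the entire implicit contribution regardless of the value of $\pa\la^{(c)}_j/\pa m$. Here $\pa/\pa m$ is read as the variational (gradient) derivative with respect to the function $m$, while $d/d\la$ is the ordinary derivative in the spectral parameter; the two-term expansion above is exactly the product rule across these two kinds of differentiation.

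The one place where the hypothesis genuinely does work is in \emph{legitimizing} the chain rule, i.e.\ in the claim that $\la^{(c)}_j(m)$ is a differentiable function of $m$ near the base potential. This is where the assumption that $\la^{(c)}_j$ is a \emph{simple} critical point enters. A simple critical point is a simple zero of $d\Dl/d\la$, so that $\frac{d\Dl}{d\la}(\la^{(c)}_j)=0$ while $\frac{d^2\Dl}{d\la^2}(\la^{(c)}_j)\neq 0$. I would apply the implicit function theorem to the equation $G(\la,m):=\frac{d\Dl}{d\la}(\la,m)=0$: since $\pa G/\pa\la=d^2\Dl/d\la^2\neq 0$ at the simple critical point, the theorem produces a locally unique smooth branch $\la=\la^{(c)}_j(m)$ with $\pa\la^{(c)}_j/\pa m$ well defined. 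This both justifies the differentiation performed above and explains why ``simple'' is indispensable: at a multiple point (such as the order-$4$ point $\la=0$ for $\xi=2$ in Example \ref{exf}, where $d\Dl/d\la$ in fact has a triple zero) one has $d^2\Dl/d\la^2=0$, the implicit function theorem fails, and the branch $\la^{(c)}_j(m)$ need not be smooth, so the clean formula can break down.

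The main obstacle is therefore not the algebra but the functional-analytic bookkeeping: one must confirm that $\Dl(\la,m)$ is jointly smooth in $\la$ and in the potential $m$ (in the appropriate function space and topology) so that both partial derivatives exist and the implicit function theorem applies in the infinite-dimensional $m$-variable. Smoothness in $\la$ is clear, since $\Dl=\mbox{trace}\,Y(2\pi)$ is assembled from the fundamental solution $Y$ of the linear system (\ref{LP1}), which depends analytically on $\la$; smoothness in $m$ follows from the linear dependence of the coefficient $i\la\Ga$ on $m$ together with the standard differentiability of the solution operator of a linear ODE on its coefficients. Once this regularity is secured, the chain-rule identity above collapses to its single explicit term by the critical-point condition, which is precisely the claimed formula.
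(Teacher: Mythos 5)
Your proposal is correct and follows essentially the same route as the paper: the chain rule splits $\pa F_j/\pa m$ into the explicit term plus $\left(\pa\Dl/\pa\la\right)\cdot\pa\la^{(c)}_j/\pa m$, the critical-point condition kills the spectral factor, and the simplicity hypothesis (nonvanishing $\pa^2\Dl/\pa\la^2$) is used precisely to guarantee that $\pa\la^{(c)}_j/\pa m$ is well defined and finite --- the paper obtains this by implicitly differentiating the criticality equation and invoking entirety of $\Dl$ in $(\la,m)$, which is exactly the implicit function theorem argument you describe.
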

\begin{proof}
We know that
\[
\frac{\pa F_j}{\pa m} = \frac{\pa \Dl }{\pa m}\bigg |_{\la = \la^{(c)}_j}
+\frac{\pa \Dl }{\pa \la }\bigg |_{\la = \la^{(c)}_j}
\frac{\pa \la^{(c)}_j}{\pa m} \ .
\]
Since 
\[
\frac{\pa \Dl }{\pa \la }\bigg |_{\la = \la^{(c)}_j} = 0 \ ,
\]
we have
\[
\frac{\pa^2 \Dl }{\pa \la^2 }\bigg |_{\la = \la^{(c)}_j}
\frac{\pa \la^{(c)}_j}{\pa m} + 
\frac{\pa^2 \Dl }{\pa \la \pa m}\bigg |_{\la = \la^{(c)}_j} = 0 \ .
\]
Since $\la^{(c)}_j$ is a simple critical point of $\Dl$, 
\[
\frac{\pa^2 \Dl }{\pa \la^2 }\bigg |_{\la = \la^{(c)}_j} \neq 0 \ .
\]
Thus 
\[
\frac{\pa \la^{(c)}_j}{\pa m} = - \left [ 
\frac{\pa^2 \Dl }{\pa \la^2 }\bigg |_{\la = \la^{(c)}_j} \right ]^{-1} 
\frac{\pa^2 \Dl }{\pa \la \pa m}\bigg |_{\la = \la^{(c)}_j} \ .
\]
Notice that $\Dl$ is an entire function of $\la$ and $m$ \cite{LM94}, then 
we know that $\frac{\pa \la^{(c)}_j}{\pa m}$ is bounded, and
\[
\frac{\pa F_j}{\pa m} = \frac{\pa \Dl }{\pa m}\bigg |_{\la = \la^{(c)}_j}\ .
\]
\end{proof}
\begin{theorem}
As a function of two variables, $\Dl = \Dl(\la, m)$ has the 
partial derivatives given by Bloch functions $\psi^\pm$ (i.e. $\psi^\pm (x) 
= e^{\pm \La x}\tilde{\psi}^\pm (x)$, where $\tilde{\psi}^\pm$ are 
periodic in $x$ of period $2\pi$, and $\La$ is a complex constant):
\begin{eqnarray}
\frac{\pa \Dl}{\pa m_+} &=& -i\la  \frac{\sqrt{\Dl^2-4}}
{W(\psi^+,\psi^-)} \psi^+_1 \psi^-_1 \ , \non \\
\frac{\pa \Dl}{\pa m_-} &=& i\la \frac{\sqrt{\Dl^2-4}}
{W(\psi^+,\psi^-)} \psi^+_2 \psi^-_2 \ , \non \\
\frac{\pa \Dl}{\pa m_3} &=& i\la \frac{\sqrt{\Dl^2-4}}
{W(\psi^+,\psi^-)} \left ( \psi^+_1 \psi^-_2 +\psi^+_2 \psi^-_1 
\right ) \ , \non \\
\frac{\pa \Dl}{\pa \la } &=& i \frac{\sqrt{\Dl^2-4}}
{W(\psi^+,\psi^-)} \int_0^{2\pi}\left [ m_3 \left ( \psi^+_1 \psi^-_2 +
\psi^+_2 \psi^-_1 \right ) -m_+ \psi^+_1 \psi^-_1 + m_- \psi^+_2 \psi^-_2 
\right ] dx \ , \non
\end{eqnarray}
where ${W(\psi^+,\psi^-)}= \psi^+_1\psi^-_2 -\psi^+_2\psi^-_1$ is the 
Wronskian. 
\label{MVT}
\end{theorem}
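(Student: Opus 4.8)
\emph{Proof proposal.} The plan is to compute the first variation of $\Dl = \mbox{tr}\, Y(2\pi)$ by variation of parameters and then to rewrite the answer in the Bloch basis. Write the spatial Lax equation (\ref{LP1}) as $\pa_x \psi = U\psi$ with $U = i\la\Ga$, so the fundamental matrix obeys $\pa_x Y = UY$, $Y(0)=I$. Perturbing either the potential ($\Ga \ra \Ga + \dl\Ga$, i.e.\ $m \ra m + \dl m$) or the parameter ($\la \ra \la + \dl\la$) produces a $\dl U$, and the variation $\dl Y$ solves the inhomogeneous equation $\pa_x\,\dl Y = U\,\dl Y + \dl U\, Y$ with $\dl Y(0)=0$. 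Duhamel's formula then gives
\[
\dl Y(2\pi) = Y(2\pi)\int_0^{2\pi} Y(x)^{-1}\,\dl U(x)\, Y(x)\, dx,
\]
so that $\dl\Dl = \mbox{tr}\,\dl Y(2\pi)$ is immediately available once we know how to handle the trace.

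Next I would diagonalize the transfer matrix. Since $\mbox{tr}\, U = i\la(m_3 - m_3)=0$, the Wronskian $W(\psi^+,\psi^-)=\psi^+_1\psi^-_2 - \psi^+_2\psi^-_1$ is $x$-independent and $\det Y \equiv 1$. Assembling the two Bloch solutions into $\Psi(x) = [\psi^+(x),\psi^-(x)]$, each column satisfies $\psi^\pm(x)=Y(x)\psi^\pm(0)$, hence $Y(x)=\Psi(x)\Psi(0)^{-1}$; using $\psi^\pm(2\pi)=\rho^\pm\psi^\pm(0)$ with $\rho^\pm = e^{\pm 2\pi\La}$ gives $Y(2\pi)=\Psi(0)\,\mbox{diag}(\rho^+,\rho^-)\,\Psi(0)^{-1}$. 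Substituting into the Duhamel formula and using cyclicity of the trace collapses all the conjugations by $\Psi(0)$, leaving
\[
\dl\Dl = \int_0^{2\pi}\mbox{tr}\big[\mbox{diag}(\rho^+,\rho^-)\,\Psi(x)^{-1}\,\dl U(x)\,\Psi(x)\big]\,dx.
\]
Because $\mbox{tr}\big(\Psi^{-1}\dl U\Psi\big)=\mbox{tr}\,\dl U = 0$, the $(2,2)$ entry of $M(x):=\Psi(x)^{-1}\dl U(x)\Psi(x)$ equals $-M_{11}(x)$, so the integrand reduces to $(\rho^+-\rho^-)M_{11}(x)$; and since $\rho^++\rho^-=\Dl$, $\rho^+\rho^-=1$, we get $\rho^+-\rho^-=\sqrt{\Dl^2-4}$, which is exactly the scalar prefactor in the statement.

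It then remains to compute $M_{11}$ explicitly. Writing $\Psi(x)^{-1}=W^{-1}\left(\begin{smallmatrix}\psi^-_2 & -\psi^-_1\\ -\psi^+_2 & \psi^+_1\end{smallmatrix}\right)$ and, for the potential variation, $\dl U = i\la\,\dl\Ga$ with $\dl\Ga = \left(\begin{smallmatrix}\dl m_3 & \dl m_-\\ \dl m_+ & -\dl m_3\end{smallmatrix}\right)$, a direct $2\times2$ multiplication yields
\[
M_{11}(x)=\frac{i\la}{W(\psi^+,\psi^-)}\Big[\dl m_3(\psi^+_1\psi^-_2+\psi^+_2\psi^-_1)+\dl m_-\,\psi^+_2\psi^-_2-\dl m_+\,\psi^+_1\psi^-_1\Big].
\]
Since $\dl\Dl=\sqrt{\Dl^2-4}\int_0^{2\pi}M_{11}\,dx$, reading off the pointwise coefficients of $\dl m_+$, $\dl m_-$, $\dl m_3$ under the integral gives the three gradient formulas for $\pa\Dl/\pa m_\pm$ and $\pa\Dl/\pa m_3$. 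For the final formula I would repeat the computation with $\dl U = i\,\dl\la\,\Ga$; the same algebra now produces the integral of $m_3(\psi^+_1\psi^-_2+\psi^+_2\psi^-_1)-m_+\psi^+_1\psi^-_1+m_-\psi^+_2\psi^-_2$, which is precisely $\pa\Dl/\pa\la$.

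The main obstacle I anticipate is bookkeeping rather than any deep difficulty. One must fix a consistent branch of $\sqrt{\Dl^2-4}$ so that it equals $\rho^+-\rho^-$ with the correct sign, tied to the labeling $\psi^\pm = e^{\pm\La x}\tilde\psi^\pm$; and one must keep straight that the $m$-derivatives are \emph{variational} (read off pointwise under the integral) while the $\la$-derivative is an ordinary derivative that survives as a genuine integral. The Bloch-basis diagonalization is valid only where $\psi^+,\psi^-$ are independent, i.e.\ $W\neq 0$ and $\Dl\neq\pm 2$; at the multiple points the formulas are then recovered by continuity.
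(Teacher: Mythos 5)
Your proposal is correct and follows essentially the same route as the paper's proof: variation of parameters (Duhamel) for $dY(2\pi)$, passing to the Bloch-function matrix so that $Y(2\pi)$ is conjugate to $\mbox{diag}(e^{2\pi\La},e^{-2\pi\La})$, and cyclicity of the trace to collapse the conjugation, with $e^{\pm 2\pi\La}=\frac{1}{2}\left[\Dl\pm\sqrt{\Dl^2-4}\right]$ supplying the $\sqrt{\Dl^2-4}$ prefactor. The only difference is one of completeness: the paper stops at the conjugated trace formula and says the partial derivatives follow ``from which,'' whereas you carry out the remaining entry-wise step explicitly (tracelessness of $\Psi^{-1}\,\dl U\,\Psi$ giving the $(\rho^+-\rho^-)M_{11}$ reduction, and the direct $2\times 2$ multiplication identifying the coefficients of $\dl m_\pm$, $\dl m_3$, $\dl\la$).
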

\begin{proof}
Recall that $Y$ is the fundamental matrix solution of (\ref{LP1}), we 
have the equation for the differential of $Y$
\[
\pa_x dY = i\la \Ga dY +i (d\la \Ga +\la d\Ga ) Y \ , \quad dY(0) = 0 \ .
\]
Using the method of variation of parameters, we let 
\[
dY =YQ\ , \quad  Q(0) = 0 \ .
\]
Thus
\[
Q(x) = i\int_0^x Y^{-1}(d\la \Ga +\la d\Ga )Y dx \ ,
\]
and 
\[
dY(x) = iY \int_0^x Y^{-1}(d\la \Ga +\la d\Ga )Y dx \ .
\]
Finally
\begin{eqnarray}
d\Dl &=& \ \mbox{trace} \ dY(2 \pi ) \non \\
&=& i\ \mbox{trace} \ \left \{ Y(2 \pi ) \int_0^{2 \pi} 
Y^{-1}(d\la \Ga +\la d\Ga )Y dx \right \} \ . 
\label{dd1} 
\end{eqnarray}
Let 
\[
Z = (\psi^+ \psi^- )
\]
where $\psi^\pm$ are two linearly independent Bloch functions (For 
the case that there is only one linearly independent Bloch function,
L'Hospital's rule has to be used, for details, see \cite{LM94}), such 
that 
\[
\psi^\pm = e^{\pm \La x} \tilde{\psi}^\pm \ ,
\]
where $\tpsi^\pm$ are periodic in $x$ of period $2\pi$ and $\La$ is a 
complex constant (The existence of such functions is the result of 
the well known Floquet theorem). Then
\[
Z(x)=Y(x) Z(0)\ , \quad Y(x)=Z(x) [Z(0)]^{-1}\ .
\]
Notice that
\[
Z(2\pi ) = Z(0) E\ , \quad \mbox{where} \  
E = \left ( \begin{array}{lr} e^{\La 2 \pi} & 0 \cr 
                              0 & e^{-\La 2 \pi} \cr \end{array} 
\right ) \ .
\]
Then 
\[
Y(2\pi ) = Z(0) E [Z(0)]^{-1}\ .
\]
Thus
\[
\Dl = \ \mbox{trace} \ Y(2\pi ) = \ \mbox{trace} \ E = e^{\La 2 \pi} 
+ e^{-\La 2 \pi} \ ,
\]
and 
\[
e^{\pm \La 2 \pi} = \frac{1}{2} [ \Dl \pm \sqrt{\Dl^2 - 4}]\ .
\]
In terms of $Z$, $d\Dl$ as given in (\ref{dd1}) takes the form 
\begin{eqnarray*}
d\Dl &=& i\ \mbox{trace} \ \bigg \{ Z(0) E [Z(0)]^{-1} \int_0^{2\pi}
Z(0)[Z(x)]^{-1}(d\la \Ga +\la d\Ga )Z(x)[Z(0)]^{-1}dx \bigg \} \\
&=&i \ \mbox{trace} \ \bigg \{ E \int_0^{2\pi}[Z(x)]^{-1}
(d\la \Ga +\la d\Ga )Z(x)dx \bigg \} \ ,
\end{eqnarray*}
from which one obtains the partial derivatives of $\Dl$ as stated in 
the theorem.
\end{proof}
It turns out that the partial derivatives of $F_j$ provide the perfect
Melnikov vectors rather than those of the Hamiltonian or other invariants
\cite{LM94}, in the sense that $F_j$ is the invariant whose level sets
are the separatrices.

\subsection{An Explicit Expression of the Melnikov Vector Along the 
Figure Eight Connecting to the Domain Wall} 

We continue the calculation in subsection \ref{FEDW} to obtain an 
explicit expression of the Melnikov vector along the figure eight 
connecting to the domain wall. Apply the Darboux transformation (\ref{DT1}) 
to $\phi^\pm$ (\ref{fbpp}) at $\la =\nu$, we obtain 
\[
\hphi^\pm = \pm \frac{\bnu - \nu}{\bnu} \frac{\exp \{ \mp \frac{1}{2} \sg 
\mp i \frac{1}{2} \ga \} W(\phi^+,\phi^-)}{|\phi_1|^2 + |\phi_2|^2}
\left ( \begin{array}{c} \overline{\phi_2} \cr \cr - \overline{\phi_1} 
\cr \end{array}\right )\ .
\]
In the formula (\ref{DT1}), for general $\la$, 
\[
\mbox{det} G = \frac{(\nu - \la )(\bnu - \la )}{|\nu |^2} \ ,
\quad W(\hpsi^+,\hpsi^-) = \ \mbox{det} G \ W(\psi^+,\psi^-) \ .
\]
In a neighborhood of $\la =\nu$, 
\[
\Dl^2 -4 = \Dl (\nu )\Dl'' (\nu ) (\la -\nu )^2 + \ \mbox{higher order 
terms in}\ (\la -\nu )\ .
\]
As $\la \ra \nu$, by L'Hospital's rule
\[
\frac{\sqrt{\Dl^2 -4}}{W(\hpsi^+,\hpsi^-)} \ra 
\frac{\sqrt{\Dl (\nu )\Dl'' (\nu )}}{\frac{\nu -\bnu}{|\nu |^2}
W(\phi^+,\phi^-)} \ .
\]
Notice, by the calculation in Example \ref{exf}, that
\[
\nu = i \frac{\sqrt{3}}{2}\ ,  \quad \Dl (\nu ) = -2 
\ ,  \quad \Dl'' (\nu ) = -24 \pi^2 \ ,
\]
then by Theorem \ref{MVT}, 
\begin{eqnarray*}
\frac{\pa \Dl}{\pa m_+}\bigg |_{m = \hm} &=& 
12 \sqrt{3} \pi \frac{i}{(|\phi_1|^2 + |\phi_2|^2)^2}
\overline{\phi_2}^{\ 2} \ , \\
\frac{\pa \Dl}{\pa m_-}\bigg |_{m = \hm} &=& 
12 \sqrt{3} \pi \frac{-i}{(|\phi_1|^2 + |\phi_2|^2)^2}
\overline{\phi_1}^{\ 2} \ , \\
\frac{\pa \Dl}{\pa m_3}\bigg |_{m = \hm} &=& 
12 \sqrt{3} \pi \frac{2i}{(|\phi_1|^2 + |\phi_2|^2)^2}
\overline{\phi_1} \overline{\phi_2}\ ,
\end{eqnarray*}
where $\hm$ is given in (\ref{F81})-(\ref{F82}). With the explicit 
expression (\ref{fbp}) of $\phi$, we obtain the explicit expressions
of the Melnikov vector,
\begin{eqnarray}
\frac{\pa \Dl}{\pa m_+}\bigg |_{m = \hm} &=& \frac{3\sqrt{3}\pi}{2}
\frac{i \ \mbox{sech}2\tau}{(2-\sqrt{3} \ \mbox{sech}2\tau \sin 2\chi )^2}
\bigg [ (1-2\tanh 2\tau )\cos 2\chi \non \\
& & + i(2-\tanh 2\tau )\sin 2\chi -i \sqrt{3} \ \mbox{sech}2\tau 
\bigg ] e^{-i2x} \ , \label{MV1} \\
\frac{\pa \Dl}{\pa m_-}\bigg |_{m = \hm} &=& \frac{3\sqrt{3}\pi}{2}
\frac{-i \ \mbox{sech}2\tau}{(2-\sqrt{3} \ \mbox{sech}2\tau \sin 2\chi )^2}
\bigg [ (1+2\tanh 2\tau )\cos 2\chi \non \\
& & - i(2+\tanh 2\tau )\sin 2\chi +i \sqrt{3} \ \mbox{sech}2\tau 
\bigg ] e^{i2x} \ , \label{MV2} \\
\frac{\pa \Dl}{\pa m_3}\bigg |_{m = \hm} &=& \frac{3\sqrt{3}\pi}{2}
\frac{2i \ \mbox{sech}2\tau}{(2-\sqrt{3} \ \mbox{sech}2\tau \sin 2\chi )^2}
\bigg [ 2\ \mbox{sech}2\tau -\sqrt{3} \sin 2\chi \non \\
& & -i \sqrt{3} \tanh 2\tau \cos 2\chi \bigg ] \ , \label{MV3}
\end{eqnarray}
where again 
\[
m_\pm =m_1 \pm im_2 \ , \quad \tau = \frac{\sqrt{3}}{2}t +\frac{\sg}{2} 
\ , \quad \chi = \frac{1}{2}(x+\ga )\ , 
\]
and $\sg$ and $\ga$ are two real parameters.

\section{A Melnikov Function}

The forced Landau-Lifshitz-Gilbert (LLG) equation (\ref{LLG}) can be 
rewritten in the form,
\begin{equation}
\pa_t m = - m \times m_{xx} + \e f +\e^2 g  
\label{LLG1}
\end{equation}
where $f$ is the perturbation 
\begin{eqnarray*}
f &=& -a m \times e_x +  m_3 (m \times e_z) - b m_1 (m \times e_x) \\
  & & -\al m \times ( m \times m_{xx} ) +(\be_1+ \be_2 \cos \om_0 t ) m 
\times ( m \times e_x ) \ , \\
g &=& -\al m \times [ m \times (a e_x - m_3 e_z + b m_1 e_x)] \ .
\end{eqnarray*}
The Melnikov function for the forced LLG (\ref{LLG}) is given as
\[
M = \int_{-\infty}^\infty \int_0^{2\pi} \bigg [ \frac{\pa \Dl}{\pa m_+}
(f_1 +if_2) + \frac{\pa \Dl}{\pa m_-}(f_1 -if_2) + \frac{\pa \Dl}{\pa m_3}
f_3 \bigg ]\bigg |_{m = \hm} dx dt \ ,
\]
where $\hm$ is given in (\ref{F81})-(\ref{F82}), and $\frac{\pa \Dl}{\pa w}\ $
($w=m_+, m_-, m_3$) are given in (\ref{MV1})-(\ref{MV3}). The Melnikov function
depends on several external and internal parameters $M = M(a,b,\al ,\be_1 , 
\be_2 , \om_0, \sg , \ga )$ where $\sg$ and $\ga$ are internal parameters. 
We can split $f$ as follows:
\begin{eqnarray*}
f &=& af^{(a)}+f^{(0)}+bf^{(b)}+\al f^{(\al )}+\be_1 f^{(\be_1 )} \\
& & + \be_2\left [\cos \left (\frac{\sg \om_0}{\sqrt{3}}\right ) f^{(c)} + 
\sin \left (\frac{\sg \om_0}
{\sqrt{3}} \right ) f^{(s)}\right ]\ ,
\end{eqnarray*}
where
\begin{eqnarray*}
f^{(a)} &=& -m \times e_x \ , \\
f^{(0)} &=& m_3 (m \times e_z) \ , \\
f^{(b)} &=& -m_1 (m \times e_x) \ , \\
f^{(\al )} &=& - m \times (m \times m_{xx})\ , \\
f^{(\be_1)} &=& m \times (m \times e_x)\ , \\
f^{(c)} &=& \cos \left (\frac{2}{\sqrt{3}} \om_0 \tau \right ) m
\times (m \times e_x)\ , \\
f^{(s)} &=& \sin \left (\frac{2}{\sqrt{3}} \om_0 \tau \right ) m
\times (m \times e_x)\ .
\end{eqnarray*}
Thus $M$ can be splitted as
\begin{eqnarray}
M &=& a M^{(a)} + M^{(0)} + b M^{(b)} + \al M^{(\al )} +\be_1 M^{(\be_1 )} \non \\
 & & + \be_2 \left [\cos \left (\frac{\sg \om_0}{\sqrt{3}}\right ) M^{(c)} + 
\sin \left (\frac{\sg \om_0}{\sqrt{3}} \right ) M^{(s)} \right ]\ , \label{mlsp}
\end{eqnarray}
where $M^{(\z )} = M^{(\z )}(\ga )$, $\z = a,0,b,\al , \be_1$, and 
$M^{(\z )} = M^{(\z )}(\ga , \om_0 )$, $\z = c,s$. 

In general \cite{Li04}, the zeros of the Melnikov function indicate 
the intersection of certain center-unstable and center-stable manifolds. 
In fact, the Melnikov function is the 
leading order term of the distance between the center-unstable and center-stable manifolds. 
In some cases,
such an intersection can lead to homoclinic orbits and homoclinic chaos. 
Here in the current problem, we do not have an invariant manifold result.
Therefore, our calculation on the Melnikov function is purely from a 
physics, rather than rigorous mathematics, point of view. 

In terms of the variables $m_+$ and $m_3$, the forced 
Landau-Lifshitz-Gilbert (LLG) equation (\ref{LLG}) can be 
rewritten in the form that will be more convenient for the calculation 
of the Melnikov function,
\begin{eqnarray}
\pa_t m_+ &=& i (m_+ m_{3xx} -m_3 m_{+xx}) + \e f_+ + \e^2 g_+\ , \label{LLG21} \\
\pa_t m_3 &=& \frac{1}{2i} (m_+ \overline{m_+}_{xx} -\overline{m_+} 
m_{+xx}) + \e f_3 + \e^2 g_3\ , 
\label{LLG22}
\end{eqnarray}
where 
\begin{eqnarray*}
f_+ &=&  f_1 + i f_2 = af_+^{(a)}+f_+^{(0)}+bf_+^{(b)}+\al f_+^{(\al )}+
\be_1 f_+^{(\be_1)} \\
& & + \be_2 \left [\cos \left (\frac{\sg \om_0}{\sqrt{3}}\right ) 
f_+^{(c)} + \sin \left (\frac{\sg \om_0}{\sqrt{3}} \right ) f_+^{(s)} \right ]\ , \\
f_3 &=& af_3^{(a)}+f_3^{(0)}+bf_3^{(b)}+\al f_3^{(\al )}+
\be_1 f_3^{(\be_1 )} \\
& & + \be_2 \left [\cos \left (\frac{\sg \om_0}{\sqrt{3}}\right ) 
f_3^{(c)} + \sin \left (\frac{\sg \om_0}{\sqrt{3}} \right ) f_3^{(s)}\right ]\ , \\
g_+ &=&  g_1 + i g_2 = \al a g_+^{(a)}+\al g_+^{(0)}+\al b g_+^{(b)} \ , \\
g_3 &=& \al a g_3^{(a)}+\al g_3^{(0)}+\al b g_3^{(b)} \ , 
\end{eqnarray*}
\begin{eqnarray*}
f_+^{(a)} &=& -i m_3  \ , \\
f_+^{(0)} &=& -i m_3 m_+ \ , \\
f_+^{(b)} &=& -i \frac{1}{2}m_3 (m_+ + \overline{m_+}) \ , \\
f_+^{(\al )} &=&  \frac{1}{2} m_+ (\overline{m_+} m_{+xx} -
m_+ \overline{m_+}_{xx}) + m_3 (m_3 m_{+xx} -m_+ m_{3xx}) \ , \\
f_+^{(\be_1)} &=& \frac{1}{2}m_+(m_+ - \overline{m_+}) - m_3^2 \ , \\
f_+^{(c)} &=& \cos \left (\frac{2}{\sqrt{3}} \om_0 \tau \right )\left [
\frac{1}{2}m_+(m_+ - \overline{m_+}) - m_3^2 \right ]\ , \\
f_+^{(s)} &=& \sin \left (\frac{2}{\sqrt{3}} \om_0 \tau \right )\left [
\frac{1}{2}m_+(m_+ - \overline{m_+}) - m_3^2 \right ]\ , 
\end{eqnarray*} 
\begin{eqnarray*}
f_3^{(a)} &=& \frac{1}{2i}(m_+ - \overline{m_+}) \ , \\
f_3^{(0)} &=& 0 \ , \\
f_3^{(b)} &=& \frac{1}{4i}(m_+^2 - \overline{m_+}^2) \ , \\
f_3^{(\al )} &=& m_{3xx} |m_+|^2 - \frac{1}{2} m_3  
(m_+ \overline{m_+}_{xx} +\overline{m_+} m_{+xx}) \ , 
\end{eqnarray*} 
\begin{eqnarray*}
f_3^{(\be_1 )} &=& \frac{1}{2}m_3 (m_+ + \overline{m_+})\ , \\
f_3^{(c)} &=& \frac{1}{2} \cos \left (\frac{2}{\sqrt{3}} \om_0 \tau \right )    
m_3 (m_+ + \overline{m_+})\ , \\
f_3^{(s)} &=& \frac{1}{2} \sin \left (\frac{2}{\sqrt{3}} \om_0 \tau \right )    
m_3 (m_+ + \overline{m_+})\ , 
\end{eqnarray*} 
\begin{eqnarray*}
g_+^{(a)} &=& m_3^2 - \frac{1}{2}m_+ (m_+ - \overline{m_+})\ , \\
g_+^{(0)} &=& m_3^2 m_+ \ , \\
g_+^{(b)} &=& \frac{1}{2}m^2_3 (m_+ + \overline{m_+})-
\frac{1}{4}m_+ (m_+^2 - \overline{m_+}^2)\ , \\
g_3^{(a)} &=& -\frac{1}{2} m_3 (m_+ + \overline{m_+})\ , \\
g_3^{(0)} &=&- m_3 |m_+|^2 \ , \\
g_3^{(b)} &=&- \frac{1}{4} m_3 (m_+ + \overline{m_+})^2 \ . \\
\end{eqnarray*}
\begin{figure}[ht] 
\centering
\subfigure[]{\includegraphics[width=2.3in,height=2.3in]{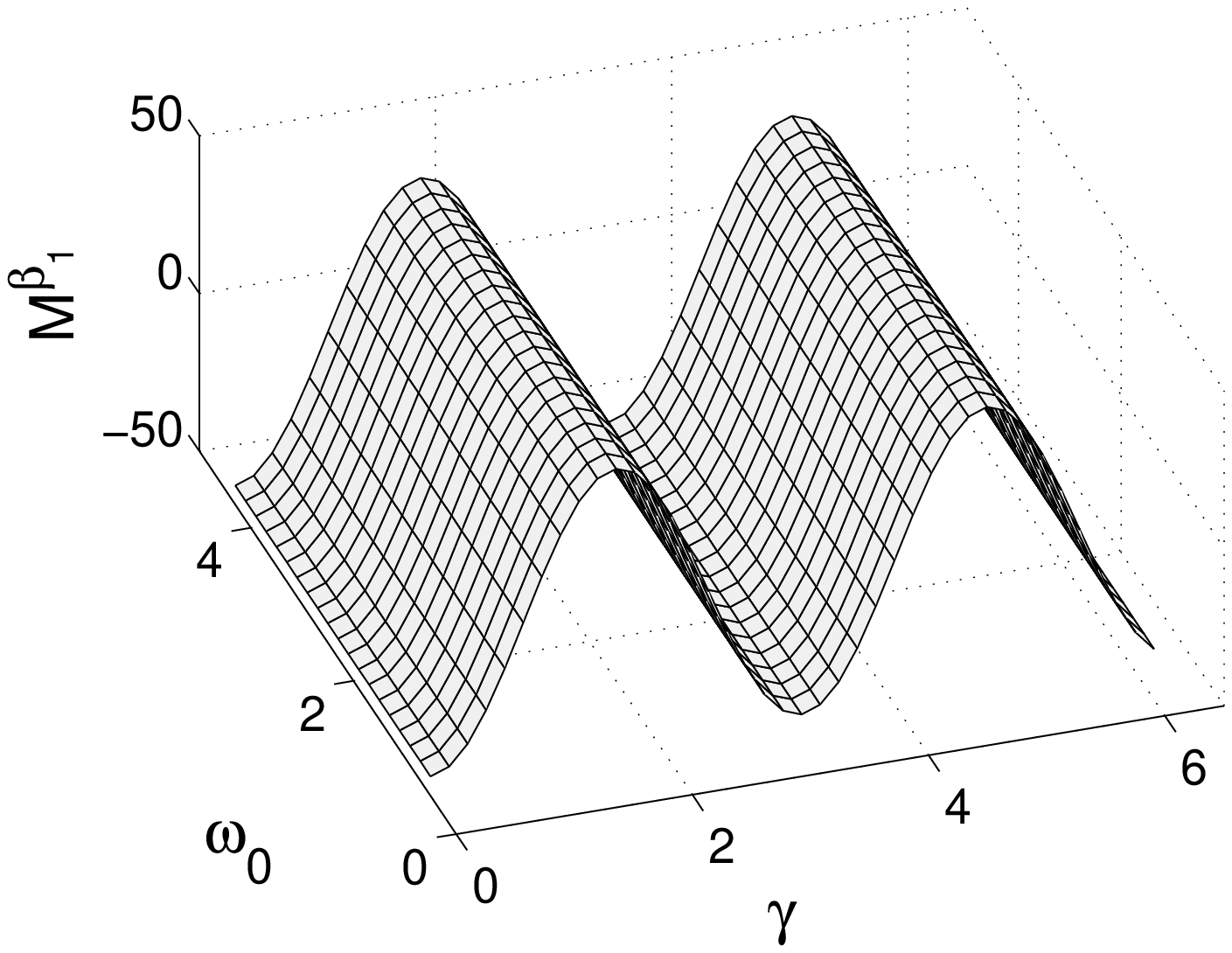}}
\subfigure[]{\includegraphics[width=2.3in,height=2.3in]{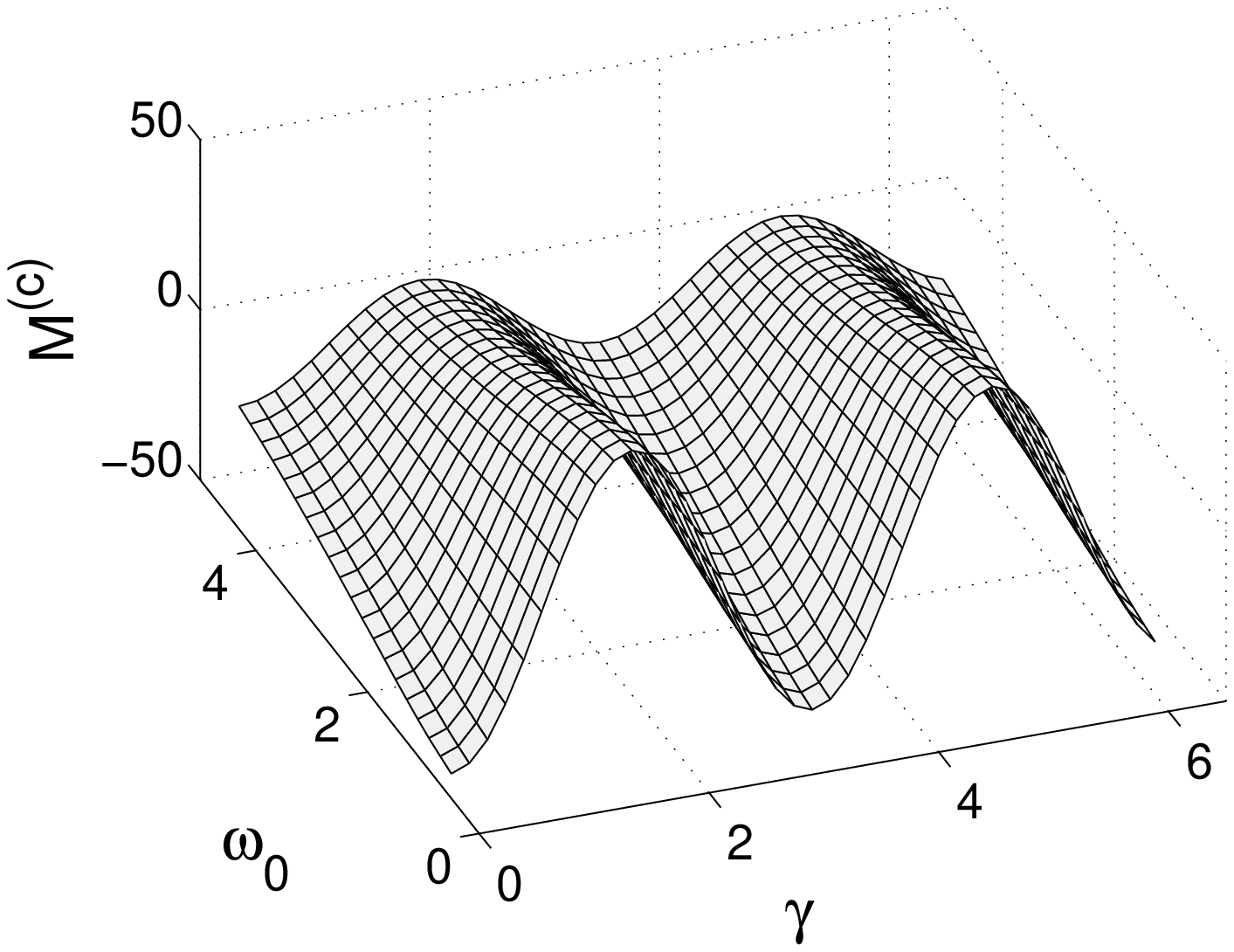}}
\subfigure[]{\includegraphics[width=2.3in,height=2.3in]{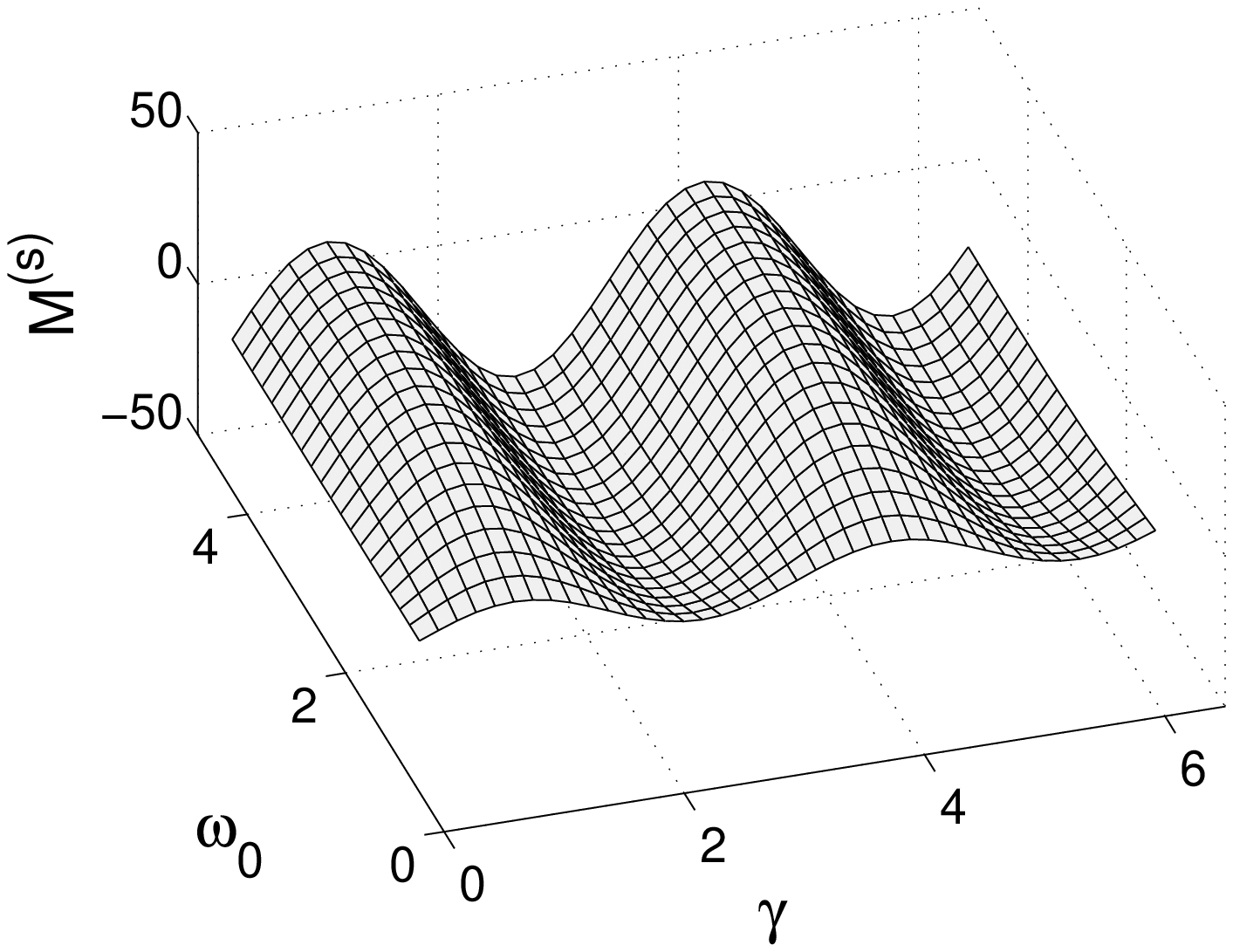}}
\caption{(a). The graph of $M^{(\be_1 )}$ as a function of $\ga$, and $M^{(\be_1 )}$ is 
independent of $\om_0$. (b). The graph of $M^{(c)}$ as a function of $\ga$ and $\om_0$.
(c). The graph of the imaginary part of $M^{(s)}$ as a function of $\ga$ and $\om_0$.}
\label{fmbcs}
\end{figure}
Direct calculation gives that 
\[
M^{(a)}(\ga ) = M^{(0)}(\ga ) = M^{(b)}(\ga ) = 0, \quad 
M^{(\al )}(\ga ) = 91.3343,
\]
and $M^{(\be_1 )}$ and $M^{(c)}$ are real, while $M^{(s)}$ is imaginary. 
The graph of $M^{(\be_1 )}$ is shown in Figure \ref{fmbcs}(a) (Notice that 
$M^{(\be_1 )}$ is independent of $\om_0$). The graph of $M^{(c)}$ is 
shown in Figure \ref{fmbcs}(b). The imaginary part of $M^{(s)}$ is 
shown in Figure \ref{fmbcs}(c).
\begin{figure}[ht] 
\includegraphics[width=3.0in,height=2.3in]{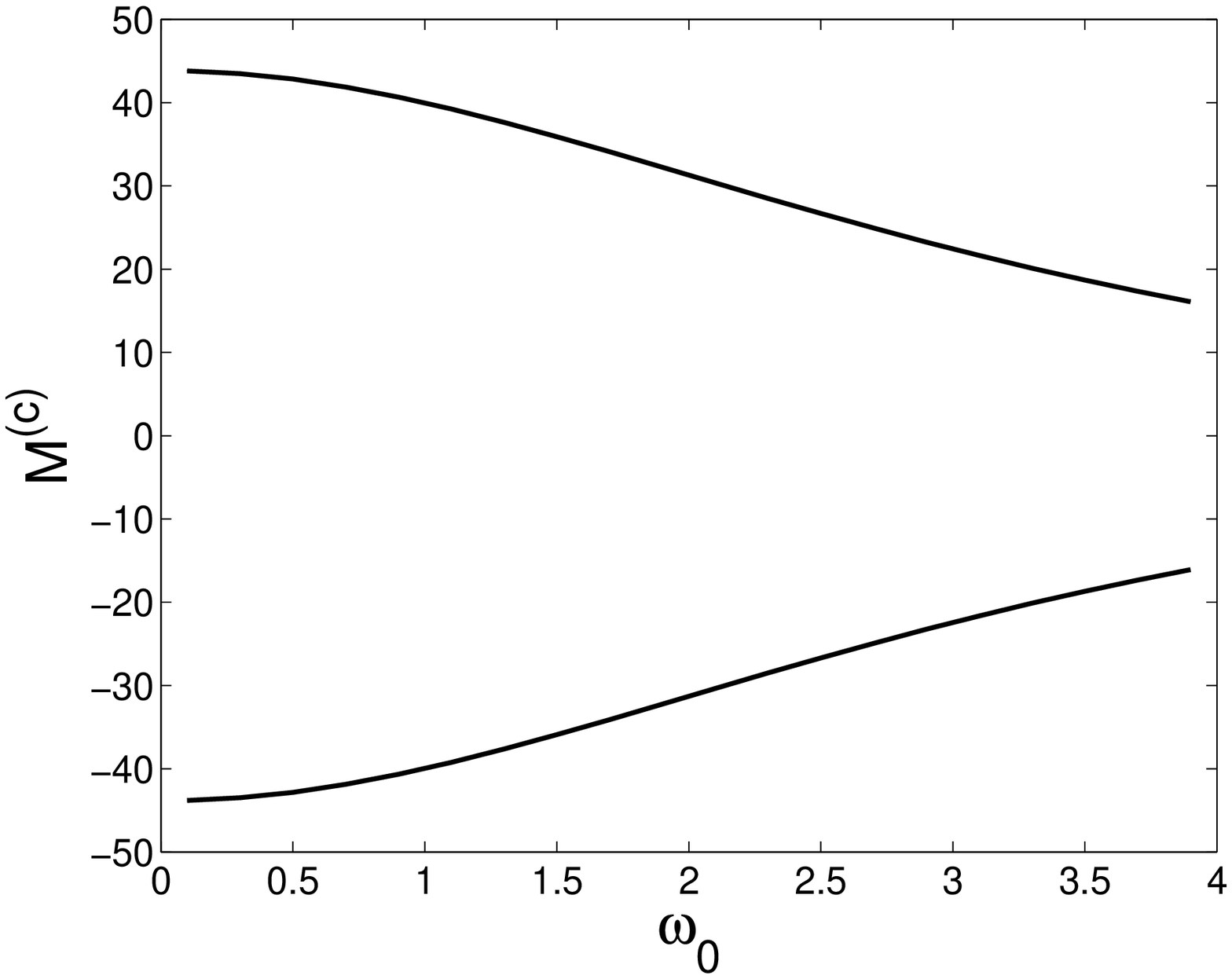}
\caption{The graphs of $M^{(c)}$ along the curves (\ref{gocur}).}
\label{mccur}
\end{figure}
In the case of only DC current ($\be_2=0$), $M = 0$ (\ref{mlsp}) leads to
\begin{equation}
\al = -\be_1 M^{(\be_1 )} / 91.3343 ,
\label{dccm}
\end{equation} 
where $M^{(\be_1 )} (\ga )$ is a function of the internal parameter $\ga$ as shown in Figure \ref{fmbcs}(a).
In the general case ($\be_2 \neq 0$), $M^{(s)} (\ga , \om_0) = 0$ determines curves
\begin{equation}
\ga = \ga (\om_0) = 0, \ \pi /2 , \ \pi , \ 3\pi /2 , 
\label{gocur}
\end{equation}
and $M = 0$ (\ref{mlsp}) leads to
\begin{equation}
|\be_2| > \bigg | \left ( 91.3343 \al + \be_1 M^{(\be_1 )} \right ) \bigg / M^{(c)} \bigg | ,
\label{accm}
\end{equation}
where $M^{(\be_1 )}$ and $M^{(c)}$ are evaluated along the curve (\ref{gocur}), $M^{(\be_1 )} = \pm 43.858$ 
(`$+$' for $\ga = \pi /2,\ 3\pi /2$; `$-$' for $\ga = 0,\  \pi$), $M^{(c)}$ is plotted in Figure \ref{mccur}
(upper curve corresponds to $\ga = \pi /2,\ 3\pi /2$; lower curve corresponds to $\ga = 0,\  \pi$),
and 
\[
\cos \left (\frac{\sg \om_0}{\sqrt{3}}\right ) = - \frac{91.3343 \al + \be_1 M^{(\be_1 )}}{\be_2M^{(c)}}.
\]

\section{Numerical Simulation}

In the entire article, we use the finite difference method to 
numerically simulate the LLG (\ref{LLG}). Due to an integrable 
discretization \cite{KS05} of the Heisenberg equation (\ref{HE}),
the finite difference performs much better than Galerkin Fourier 
mode truncations. As in (\ref{HEID}), let $m(j) = m(t, jh)$, 
$j=1, \cdots , N$, $Nh = 2\pi$, and $h$ is the spatial mesh size.
Without further notice, we always choose $N=128$ (which provides
enough precision). The only tricky part in the finite difference 
discretization of (\ref{LLG}) is the second derivative term in $H$,
for the rest terms, just evaluate $m$ at $m(j)$:
\[
\pa_x^2 m (j) = \frac{2}{h^2}
\left ( \frac{m(j+1)}{1+m(j) \cdot m(j+1)} + 
\frac{m(j-1)}{1+m(j-1) \cdot m(j)} \right ).
\]

\subsection{Only DC Current Case}

In this case, $\beta_2 = 0$ in (\ref{LLG}), and we choose $\beta_1$ 
as the bifurcation parameter, and the rest parameters as:
\begin{equation}
a=0.05, b=0.025, \alpha=0.02, \epsilon=0.01.
\label{PV1}
\end{equation}
The computation is first run for the time interval [$0, 8120 \pi$], 
then the figures are plotted starting from $t = 8120 \pi$.
The bifurcation diagram for the attractors, and the typical spatial  
profiles on the attractors are shown in Figure \ref{DC-BD-Fig}.
\begin{figure}[ht] %[t]
\includegraphics[width=4.5in,height=1.0in]{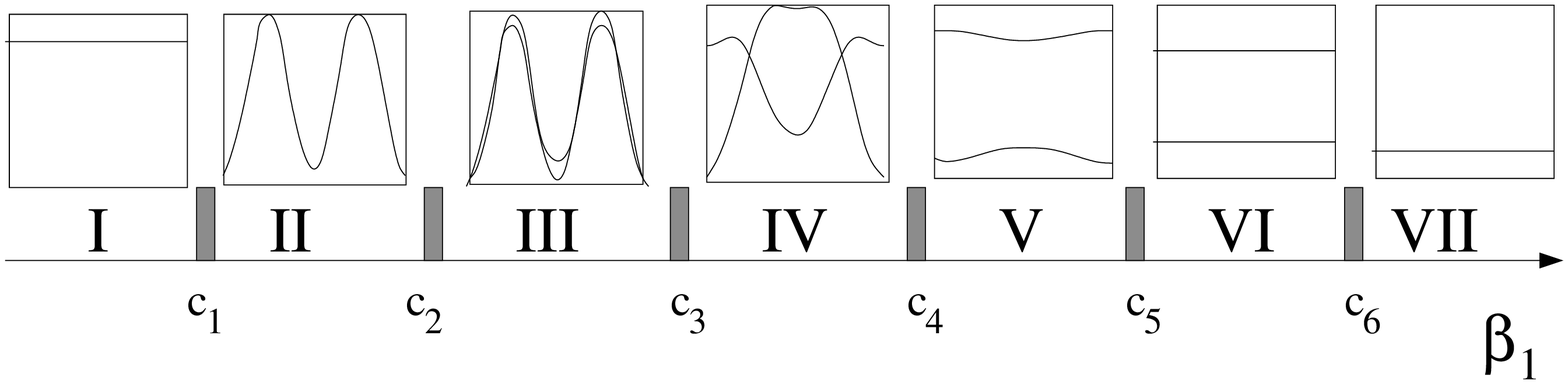}
\caption{The bifurcation diagram for the attractors and typical spatial 
profiles on the attractors in the case of only DC current where $\be_1$ 
is the bifurcation parameter, $c_1 \cdots c_6$ are the bifurcation thresholds, 
and $c_1 \in [0,0.01]$, $c_2 \in [0.0205,0.021]$, 
$c_3 \in [0.0231,0.0232]$, $c_4 \in [0.025,0.026]$,
$c_5 \in [0.08,0.1]$, $c_6 \in [0.13,0.15]$.}
\label{DC-BD-Fig}
\end{figure}
\begin{figure}[ht] %[t]
\centering
\subfigure[$\beta_1=0.0205$ spatially non-uniform fixed point]{\includegraphics[width=2.3in,height=2.3in]{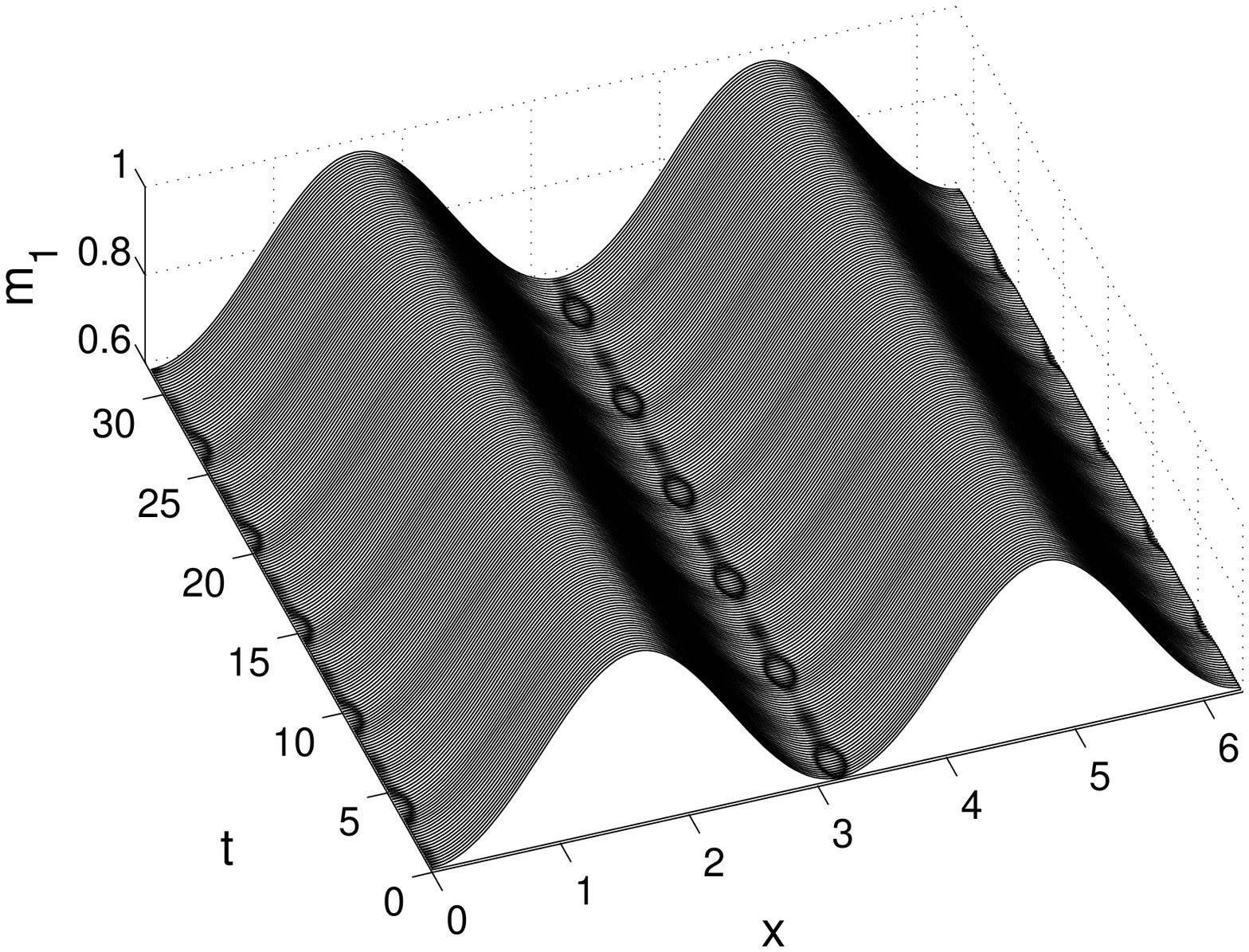}}
\subfigure[$\beta_1=0.023$ spatially non-uniform and temporally periodic or quasiperiodic attractor]{\includegraphics[width=2.3in,height=2.3in]{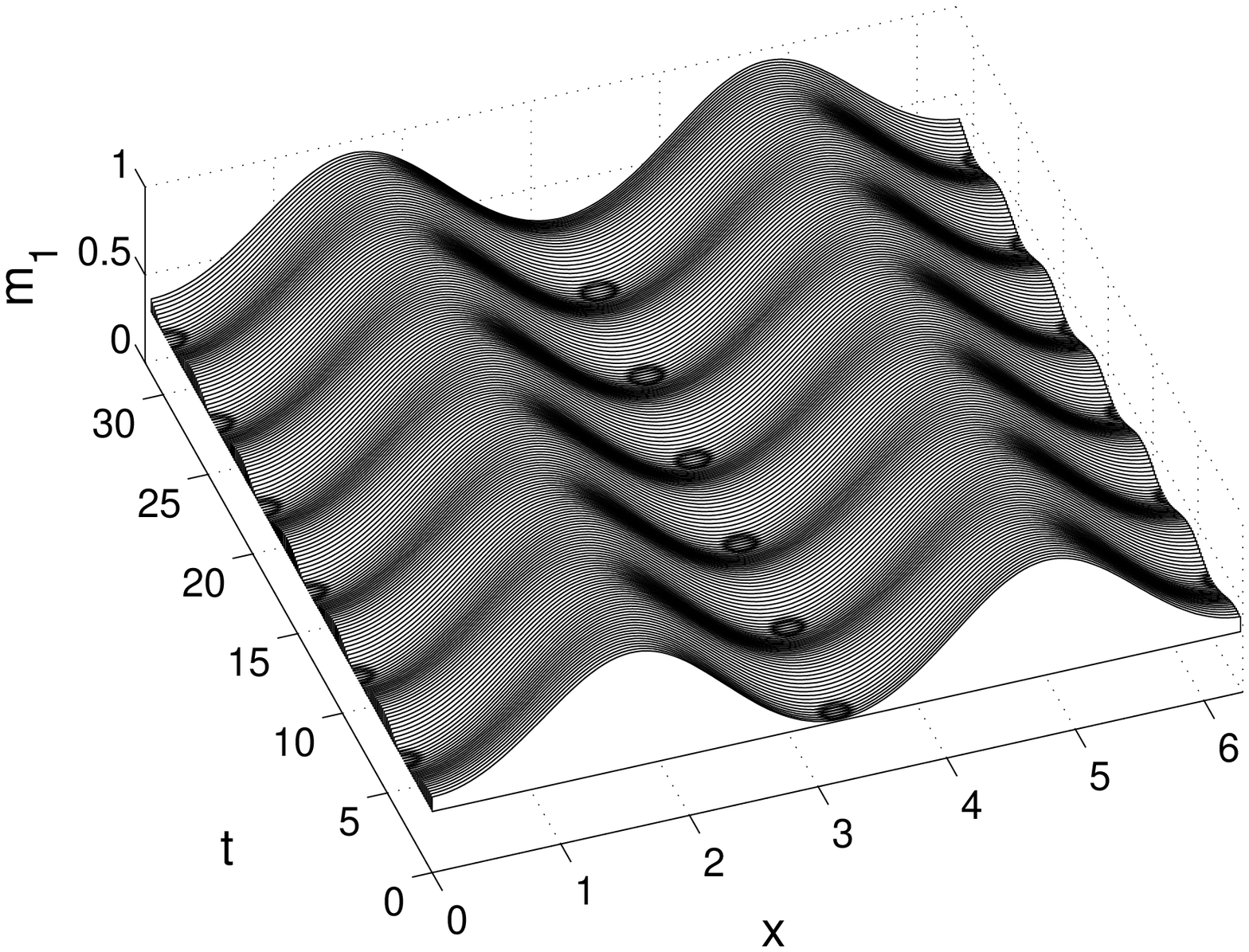}}
\subfigure[$\beta_1=0.0235$ weak chaotic attractor]{\includegraphics[width=2.3in,height=2.3in]{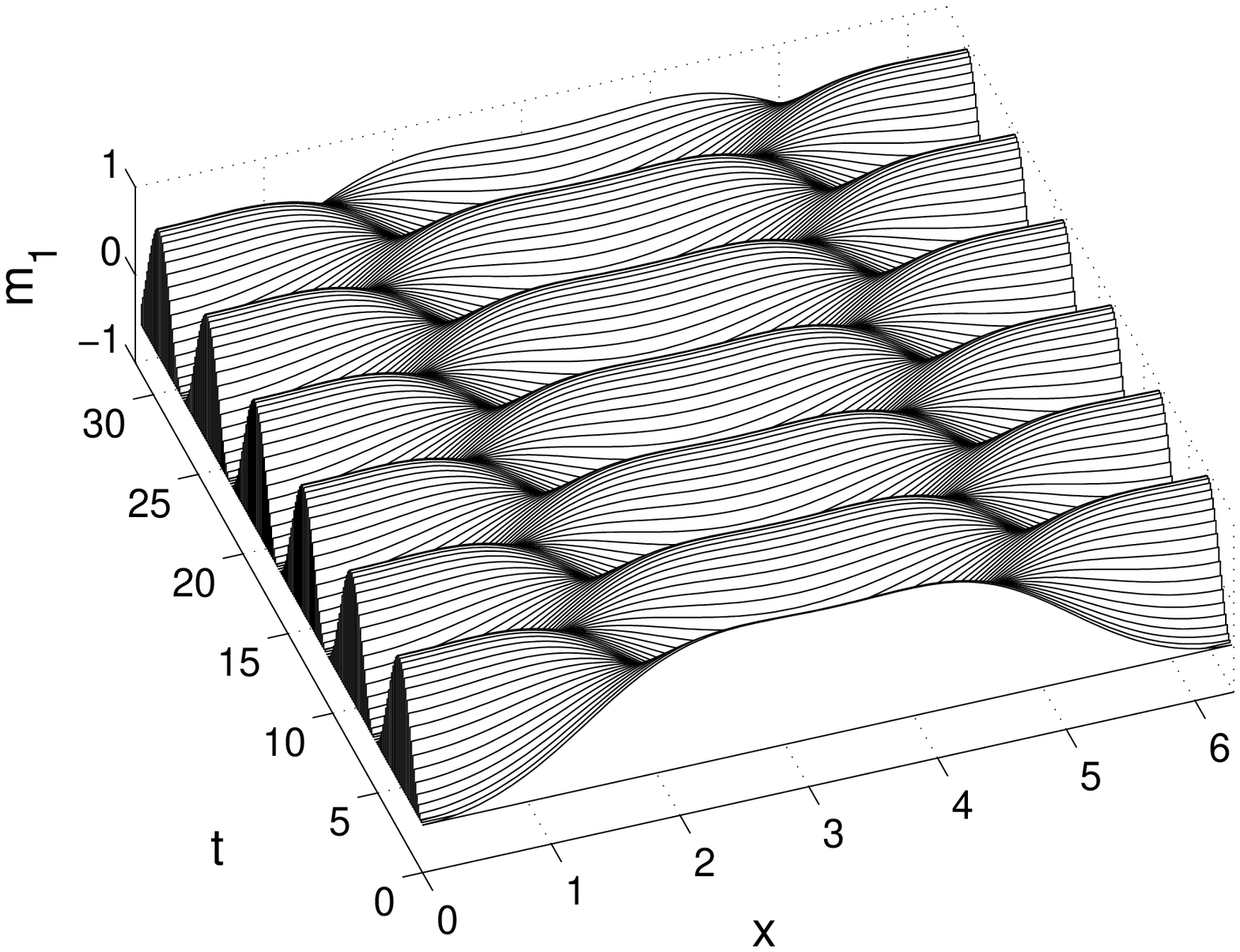}}
\caption{The spatio-temporal profiles of solutions in the attractors in the case of only DC current.}
\label{DC-Fig}
\end{figure}
\begin{figure}[ht] %[t]
\centering
\subfigure[$\beta_1=0.026$ spatially non-uniform and temporally periodic attractor]{\includegraphics[width=2.3in,height=2.3in]{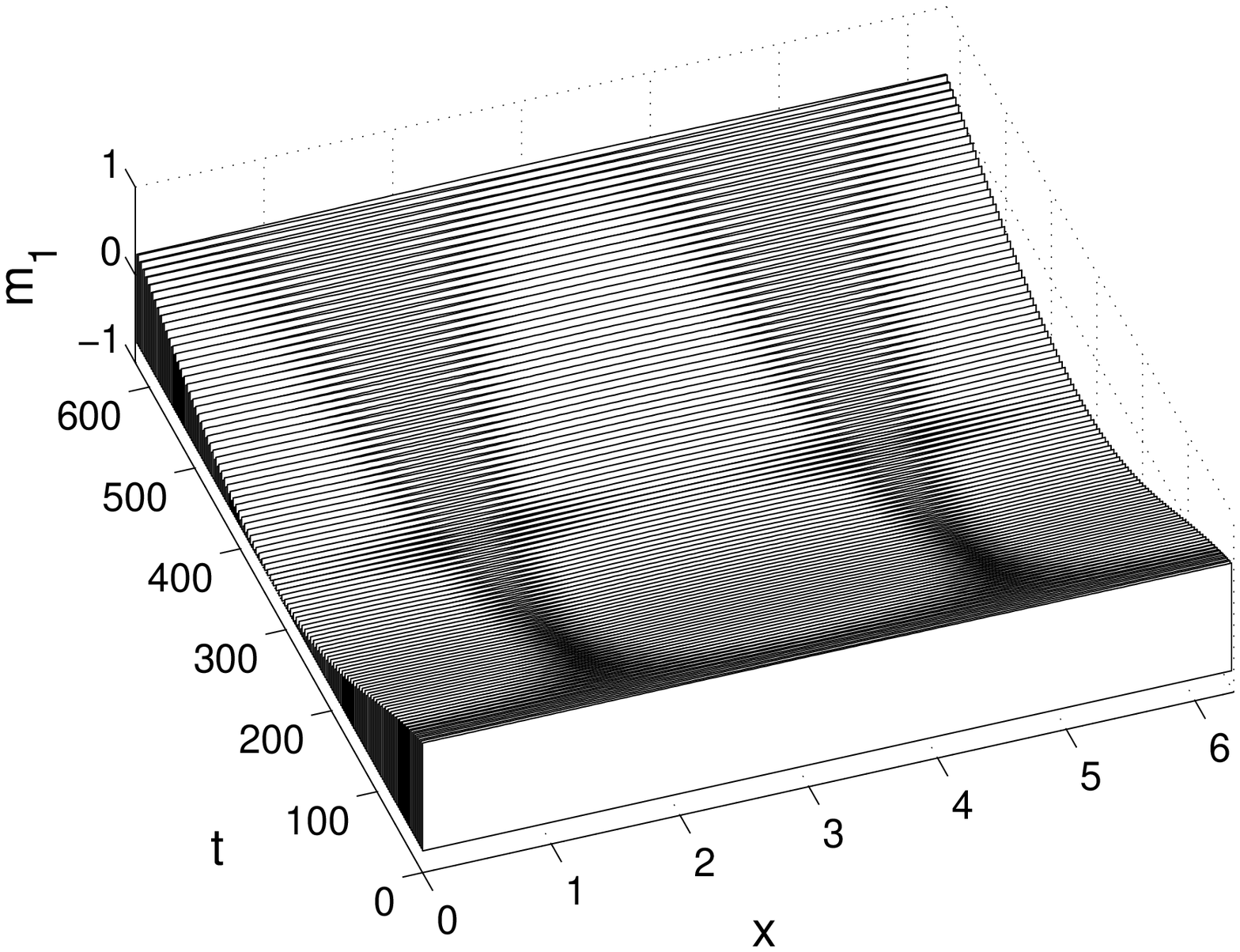}}
\subfigure[$\beta_1=0.1$ spatially uniform and temporally periodic attractor]{\includegraphics[width=2.3in,height=2.3in]{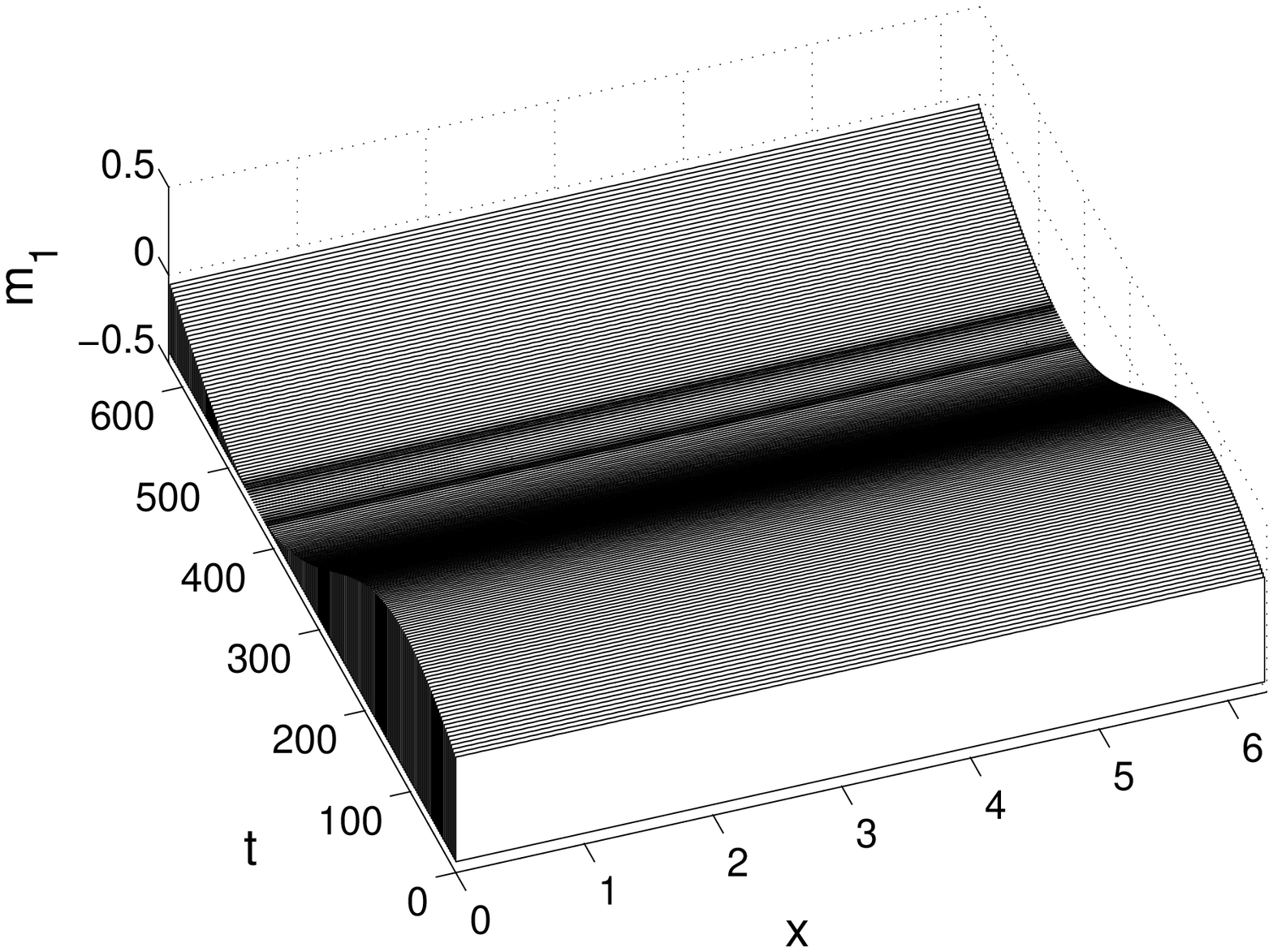}}
\caption{The spatio-temporal profiles of solutions in the attractors in the case of only DC current (continued).}
\label{DC-Fig2}
\end{figure}
This figure indiactes that interesting bifurcations happen over the interval $\be_1 \in [0, 0.15]$ which 
is the physically important regime where $\be_1$ is comparable with values of other 
parameters. There are six bifurcation thresholds $c_1 \cdots c_6$ (Fig. \ref{DC-BD-Fig}).
When $\be_1 < c_1$, the attractor is the spatially uniform fixed point $m_1 =1$ ($m_2=m_3=0$). 
When $c_1 \leq \be_1 \leq  c_2$, the attractor is a spatially non-uniform fixed point as shown in 
Figure \ref{DC-Fig}(a). When $c_2 < \be_1 <  c_3$, the attractor is spatially non-uniform and temporally
periodic (a limit cycle) or quasiperiodic (a limit torus) as shown in Figure \ref{DC-Fig}(b). Here as the 
value of $\be_1$ is increased, 
first there is one basic temporal frequence, then more frequencies enter and the temporal oscillation amplitude 
becomes bigger and bigger. When $c_3 \leq \be_1 <  c_4$, the attractor is chaotic, i.e. a strange attractor 
as shown in Figure \ref{DC-Fig}(c). Even though the chaotic nature is not very apparent in Figure \ref{DC-Fig}(c),
due to the smallness of the perturbation parameter together with smallness of all other parameters, the temporal 
evolution is chaotic, and we have used Liapunov exponent and power spectrum devices to verify this. 
When $c_4 \leq \be_1 <  c_5$, the attractor is spatially non-uniform and temporally
periodic (a limit cycle) as shown in Figure \ref{DC-Fig2}(a). The spatial modulation is small, and it is even not 
apparent in Figure \ref{DC-Fig2}(a). But it is apparent on the individual typical spatial profiles as seen 
in region V in Figure \ref{DC-BD-Fig}. With the increase of $\beta_1$, the spatial 
modulation becomes smaller and smaller.  When $c_5 \leq \be_1 <  c_6$, the attractor is spatially uniform and temporally
periodic (a limit cycle, the so-called procession) as shown in Figure \ref{DC-Fig2}(b). When $\be_1 \geq c_6$, 
the attractor is the spatially uniform fixed point $m_1 =-1$ ($m_2=m_3=0$). 

When $\beta_2=0$, the Melnikov function predicts that around $\beta_1= 0.041$ (\ref{dccm}), 
there is probably chaos; while the numerical calculation shows that there is an interval [$0.0231, 0.026$] 
for $\beta_1$ where chaos is the attractor. Since the perturbation parameter $\e = 0.01$ in the numerical 
calculation, the Melnikov function prediction seems in agreement with the numerical calculation. 

Some of the attractors in Figure \ref{DC-BD-Fig} are attractors of the corresponding ordinary differential equations 
by setting $\pa_x =0$ in (\ref{LLG}), i.e. the single domain case. These attractors are the ones in regions I, 
VI and VII in Figure \ref{DC-BD-Fig} \cite{LLZ06} \cite{YZL07}. Because we are studying the only DC current case,
the ordinary differential equations do not have any chaotic attractor \cite{LLZ06} \cite{YZL07}. 

On the other hand, the partial differential equations (\ref{LLG}) does have a chaotic attractor (region IV in  
Figure \ref{DC-BD-Fig}). 

In general, when $\beta_1 < 0$, the Gilbert damping dominates the spin torque driven by DC current and $m_1 = 1$ is the attractor. 
When $\beta_1 >0.15$, the spin torque driven by DC current dominates the Gilbert damping, $m_1 =-1$ is the attractor, and we have a 
magnetization reversal. In some technological applications, $\beta_1 >0.15$ may correspond too high DC current that can 
burn the device. On the other hand, in the technologically advantageous interval $\be_1 \in [0, 0.15]$, magnetization reversal
may be hard to achieve due to the sophisticated bifurcations in Figure \ref{DC-BD-Fig}.

\subsection{Only AC Current Case}

In this case, $\beta_1 = 0$ in (\ref{LLG}), and we choose $\beta_2$ 
as the bifurcation parameter, and the rest parameters as:
\begin{equation}
a=0.05, b=0.025, \alpha=0.0015, \epsilon=0.01, \om_0 = 0.2.
\label{PV2}
\end{equation}
Unlike the DC case, here the figures are plotted starting from $t = 0$.
It turns out that the types of attractors in the AC case are simpler than those of the DC case.
When $\beta_2 =0$, the attractor is a spatially non-uniform fixed point as shown in Figure \ref{AC-Fig}.
In this case, the only perturbation is the Gilbert damping which damps the evolution to such a 
fixed point. When $0< \beta_2 < \beta_2^*$ where $\beta_2^* \in [0.18, 0.19]$, the attractor is a spatially non-uniform
and temporally periodic solution. When $\beta_2 \geq \beta_2^*$, the attractor is chaotic as shown in Figure \ref{AC-Fig}.
\begin{figure}[ht] %[t]
\centering
\subfigure[$\beta_2=0$ spatially non-uniform fixed point]{\includegraphics[width=2.3in,height=2.3in]{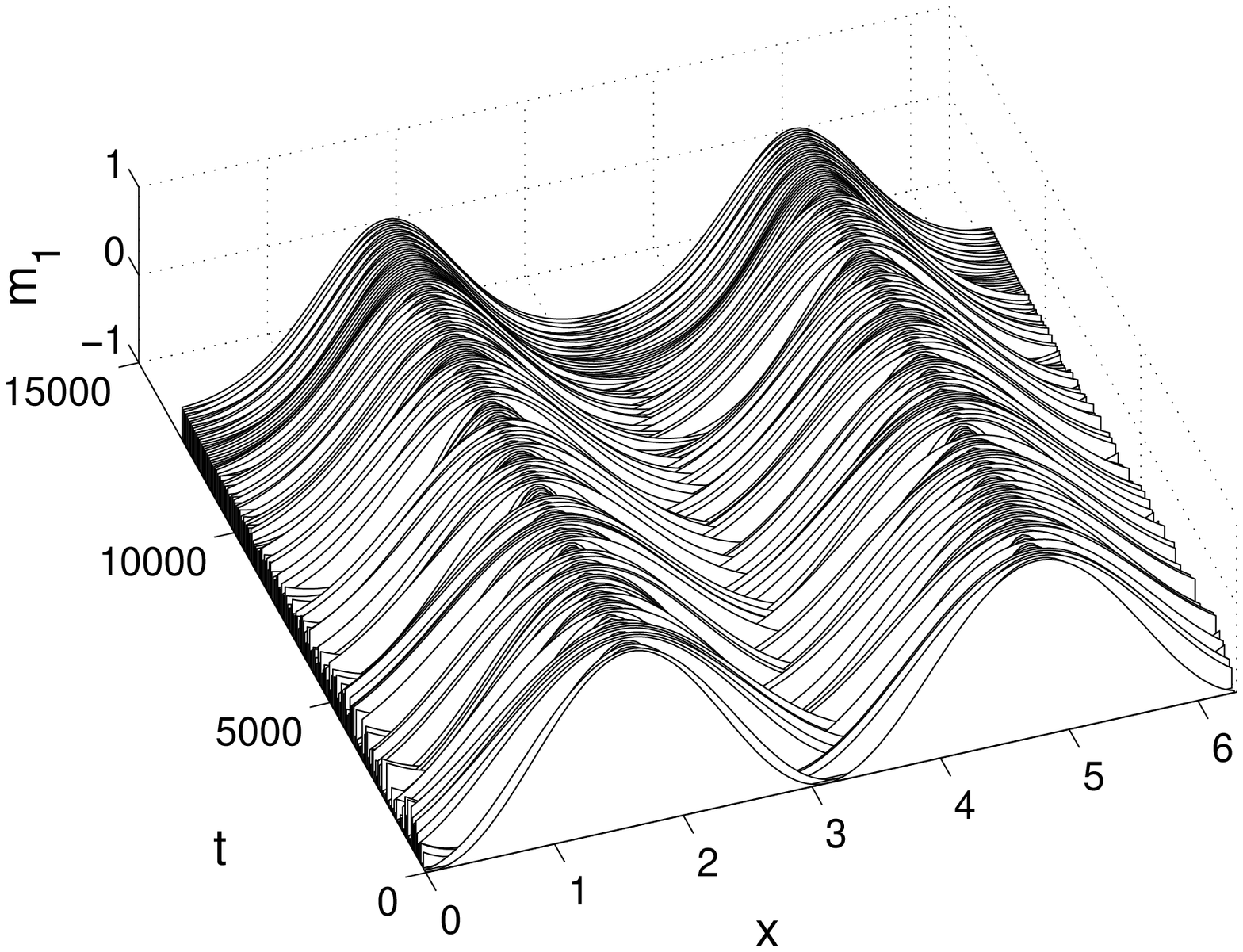}}
\subfigure[$\beta_2=0$ temporal evolution at one spatial location]{\includegraphics[width=2.3in,height=2.3in]{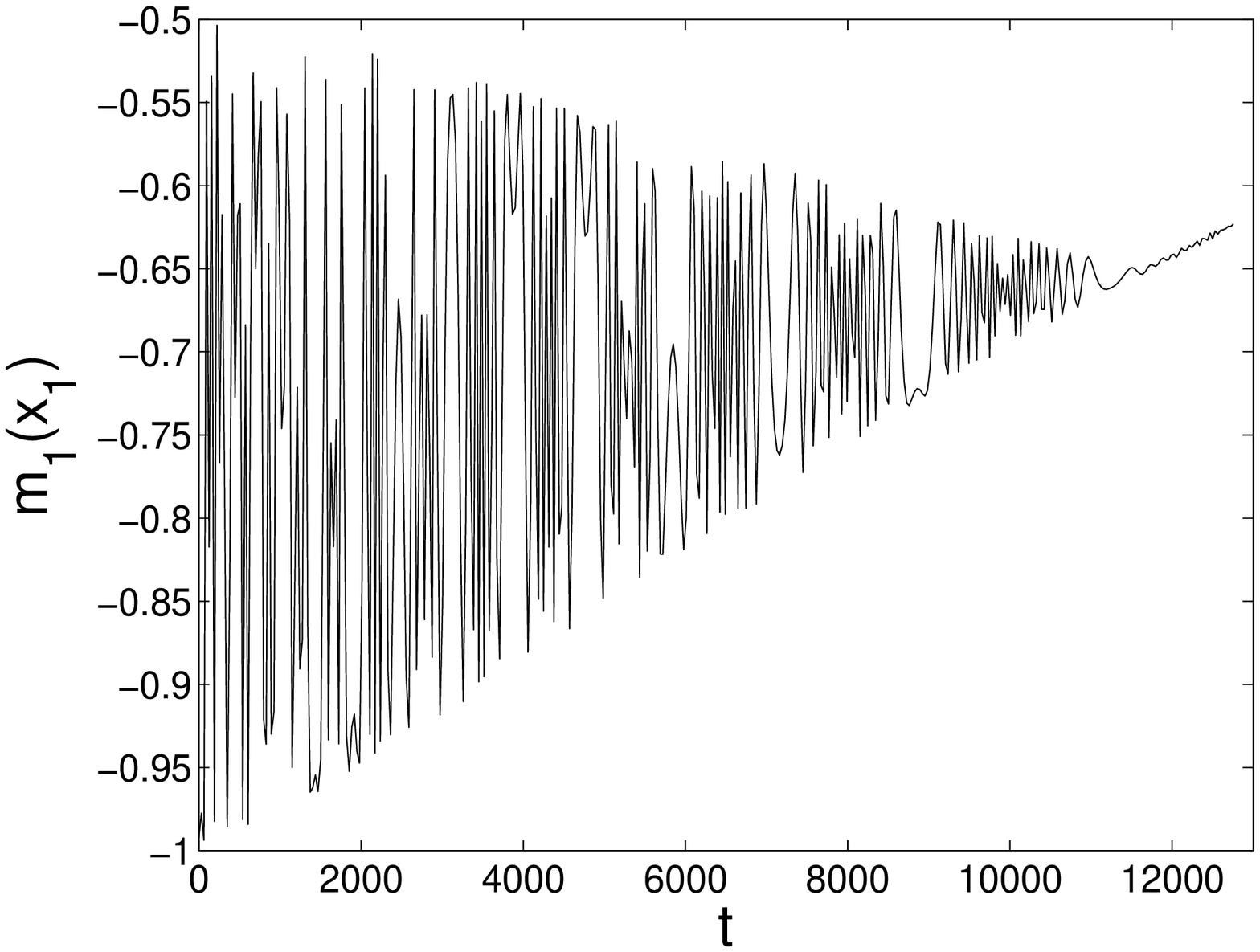}}
\subfigure[$\beta_2=0.21$ chaotic attractor]{\includegraphics[width=2.3in,height=2.3in]{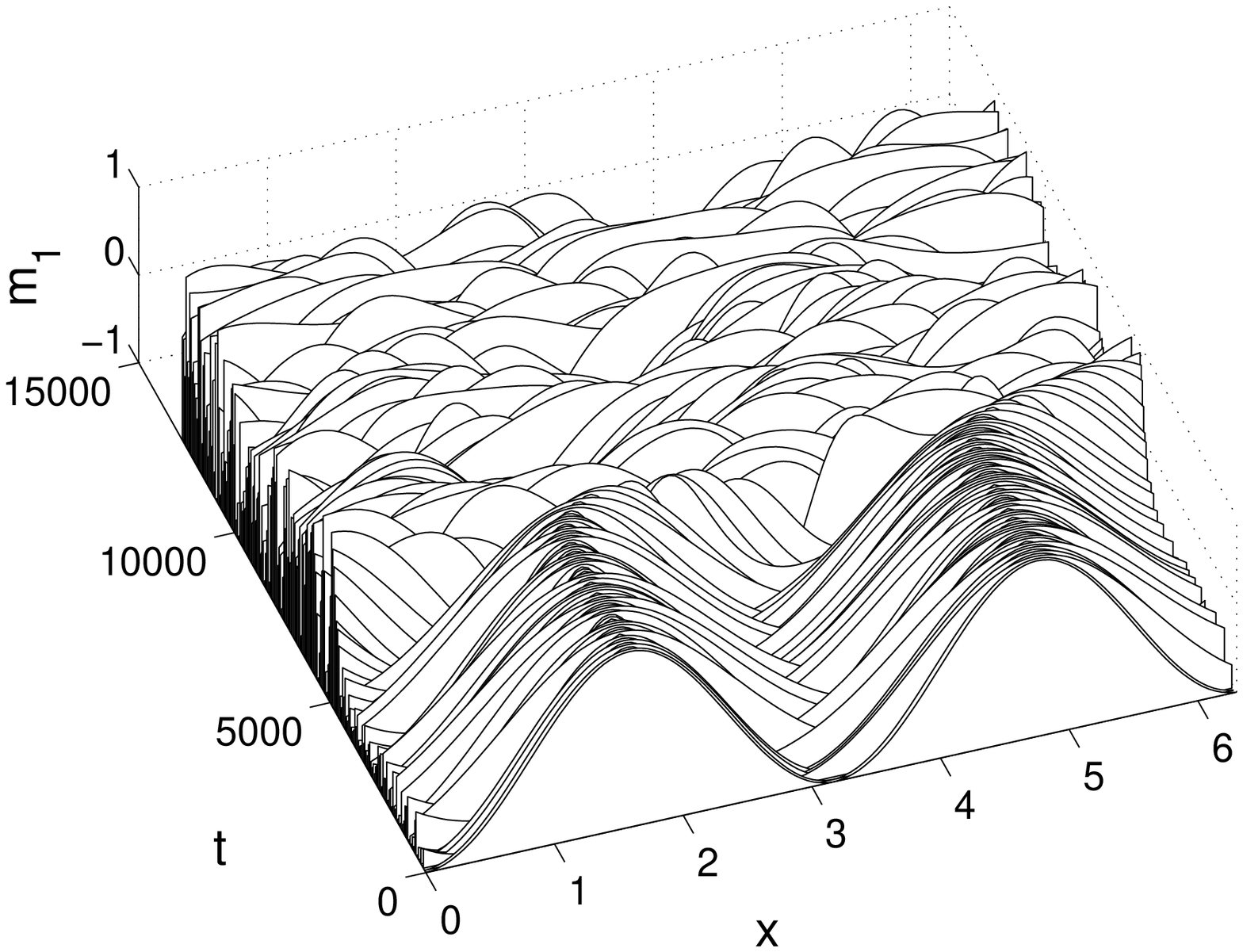}}
\subfigure[$\beta_2=0.21$ temporal evolution at one spatial location]{\includegraphics[width=2.3in,height=2.3in]{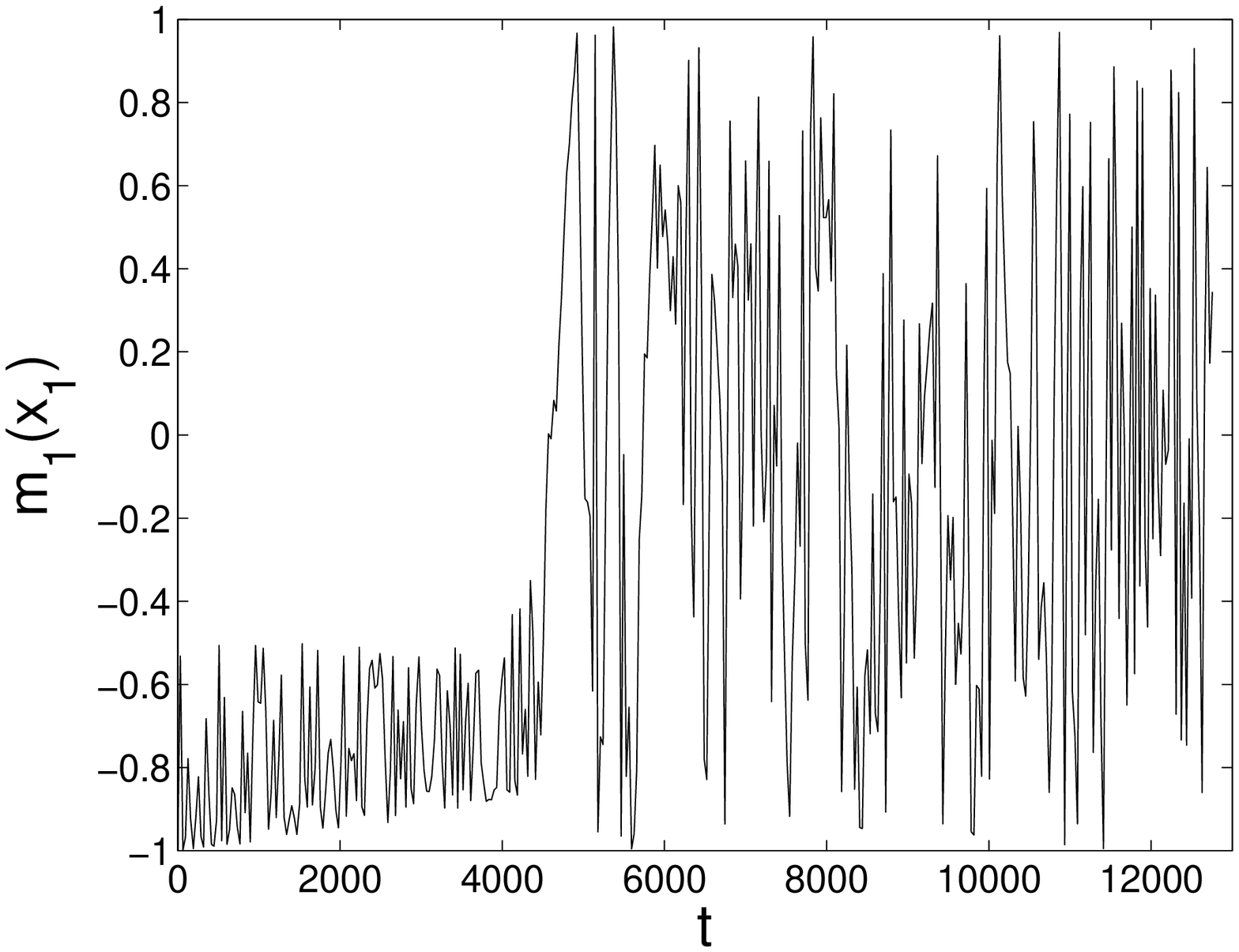}}
\caption{The attractors in the case of only AC current.}
\label{AC-Fig}
\end{figure}
\begin{figure}[ht] %[t]
\vspace{2.3in}
\caption{Homotopy deformation of the attractors under different initial conditions.
See http://www.math.missouri.edu/{\~{}}cli}
\label{AC2-Fig}
\end{figure}
\begin{figure}[ht] 
\vspace{2.3in}
\caption{The attractor when $\alpha=0.02$, $\beta_2=0.21$ and all other parameters' values are the same with Figure \ref{AC-Fig}. See http://www.math.missouri.edu/{\~{}}cli}
\label{AC-norm}
\end{figure}
Our Melnikov prediction (\ref{accm}) predicts that when $|\be_2| > 0.003$, certain center-unstable and center-stable
manifolds intersect. Our numerics shows that such an intersection seems leading to transient chaos. Only when $|\be_2| > 
\be_2^*$, the chaos can be sustained as an attractor. It seems that such sustained chaotic attractor gains extra support 
from parametric resonance due to the AC current driving \cite{Li06}, as can be seen from the turbulent spatial structure 
of the chaotic attractor (Fig. \ref{AC-Fig}), which diverges quite far away from the initial condition. 
Another factor that may be relevant is the fact that 
higher-frequency spatially oscillating domain walls have more and stronger linearly unstable modes (\ref{icd}). 
By properly choosing initial conditions, one can find the homotopy deformation from the ODE limit cycle (procession) 
\cite{LLZ06} \cite{YZL07} to the current PDE chaos as shown in Figure \ref{AC2-Fig} at the same parameter values. 

We also simulated the case of normal Gilbert damping $\al = 0.02$. For all values of $\be_2 \in [0.01, 0.3]$,
the attractor is always non-chaotic. That is, the only attractor we can find is a spatially uniform limit cycle with small 
temporal oscillation as shown in  Figure \ref{AC-norm}.

Of course, when neither $\be_1$ nor $\be_2$ is zero, the bifurcation diagram is a combination of the DC only and AC only diagrams.

\section{Appendix: The Connection Between the Heisenberg Equation and 
the NLS Equation}

In this appendix, we will show the details on the connection between the 
1D cubic focusing nonlinear Schr\"odinger (NLS) equation and the 
Heisenberg equation (\ref{HE}). The nonlinear Schr\"odinger (NLS) equation
\begin{equation}
iq_t + q_{xx} + 2 |q|^2q = 0\ , 
\label{NLS}
\end{equation}
is a well-known integrable system with the Lax pair
\begin{eqnarray}
\pa_x \phi &=& (i \la \sg_3 + U ) \phi \ , \label{LP3} \\
\pa_t \phi &=& -(2i\la^2 \sg_3 + 2 \la U + V )\phi \ , \label{LP4} 
\end{eqnarray}
where $\la$ is the complex spectral parameter, $\sg_3$ is defined in 
(\ref{pms}), and 
\[
U = \left ( \begin{array}{lr} 0 & iq \cr i \bq & 0 \cr \end{array}
\right )\ , \quad 
V= -i |q|^2 \sg_3 + \left ( \begin{array}{lr} 0 & q_x \cr - \bq_x & 0 
\cr \end{array} \right )\ .
\]
\begin{lemma}
If $\phi = (\phi_1 , \phi_2 )^T$ solves the Lax pair 
(\ref{LP3})-(\ref{LP4}) at $\la$, then $(-\overline{\phi_2}, 
\overline{\phi_1})^T$ solves the Lax pair (\ref{LP3})-(\ref{LP4}) at 
$\bar{\la}$. When $q$ is even, i.e. $q(-x)=q(x)$, then 
$(\overline{\phi_2}(-x), \overline{\phi_1}(-x))^T$ solves the 
Lax pair (\ref{LP3})-(\ref{LP4}) at $-\bar{\la}$. When $\la$ is real, 
and $\phi$ is a nonzero solution, then $(-\overline{\phi_2}, 
\overline{\phi_1})^T$ is another linearly independent solution. For any two 
solutions of the Lax pair, their Wronskian is independent of $x$ and 
$t$. 
\label{nlsym}
\end{lemma}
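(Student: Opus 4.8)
The plan is to verify each of the four symmetry claims in Lemma \ref{nlsym} by direct substitution into the Lax pair (\ref{LP3})-(\ref{LP4}), exploiting the algebraic structure of the matrices $\sg_3$, $U$, and $V$ under complex conjugation. These claims are all ``discrete symmetries'' of the ZS-AKNS scattering problem, and the key observation is how the Pauli matrix $\sg_3$ and the off-diagonal potentials transform when one conjugates and swaps the two components of $\phi$.

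First I would handle the map $\phi=(\phi_1,\phi_2)^T \mapsto \hat\phi := (-\overline{\phi_2}, \overline{\phi_1})^T$ at $\bar\la$. The mechanism is that this transformation is implemented by $\hat\phi = J\overline{\phi}$ with $J = \left(\begin{array}{lr} 0 & -1 \cr 1 & 0 \cr \end{array}\right)$. Conjugating (\ref{LP3}) gives $\pa_x \overline{\phi} = (-i\bar\la \sg_3 + \overline{U})\overline{\phi}$, and the claim reduces to the conjugation identities $J(-i\bar\la\sg_3)J^{-1} = i\bar\la\sg_3$ (since $J\sg_3 J^{-1} = -\sg_3$) and $J\,\overline{U}\,J^{-1} = U$, which follow because $\overline{U}$ has entries $-i\bar q, -iq$ and the similarity by $J$ both transposes the off-diagonal structure and supplies the needed sign. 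The same two identities, applied to $V$ (using $\overline{V} = i|q|^2\sg_3 + (\text{off-diag})$ and $J\overline{V}J^{-1}=V$), dispatch the temporal equation (\ref{LP4}). The second claim, for even $q$ at $-\bar\la$, is the composition of the first symmetry with the parity substitution $x\mapsto -x$: I would check that if $\phi(x)$ solves the system at $\bar\la$ then $\phi(-x)$ solves a system with $\sg_3\mapsto -\sg_3$ in the $i\la\sg_3$ term and $U(x)\mapsto U(-x)=U(x)$ (using $q(-x)=q(x)$), and combining this reflection with the conjugation-swap yields a solution at $-\bar\la$.

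For the third claim, when $\la\in\mathbb{R}$ the first symmetry already produces a second solution $(-\overline{\phi_2},\overline{\phi_1})^T$ at $\bar\la = \la$, i.e. at the \emph{same} spectral parameter; the only remaining point is linear independence. I would compute the Wronskian $W(\phi,\hat\phi) = \phi_1\overline{\phi_1} - \phi_2(-\overline{\phi_2}) = |\phi_1|^2 + |\phi_2|^2$, which is strictly positive whenever $\phi$ is nonzero, giving independence immediately. The fourth claim, constancy of the Wronskian of any two solutions in both $x$ and $t$, follows from the standard fact that $\pa_x W = (\operatorname{tr} A)\,W$ where $A = i\la\sg_3 + U$ is the coefficient matrix of (\ref{LP3}); since $\operatorname{tr}(i\la\sg_3+U)=0$ (traceless because $\sg_3$ is traceless and $U$ is off-diagonal), $W$ is $x$-independent, and the analogous trace computation for the $t$-coefficient $-(2i\la^2\sg_3+2\la U+V)$ — also traceless — gives $t$-independence.

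I do not expect a genuine obstacle here: the lemma is a catalogue of elementary algebraic symmetries, so the ``hard part'' is only bookkeeping — keeping the signs straight in the conjugation identities $J\sg_3 J^{-1}=-\sg_3$, $J\overline{U}J^{-1}=U$, $J\overline{V}J^{-1}=V$, and being careful that conjugating the equation flips $\la$ to $\bar\la$ while the reflection $x\mapsto -x$ flips the sign of the $\sg_3$ term. The cleanest exposition packages the first three claims through the single conjugation operator $J(\cdot)^{\overline{\phantom{x}}}$ and its interaction with parity, rather than writing out components, so I would organize the proof around verifying those three matrix identities once and then reading off all the consequences.
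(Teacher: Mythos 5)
Since the paper states Lemma \ref{nlsym} without proof (it is invoked as a standard symmetry fact right before the Floquet construction), your proposal can only be judged on its own merits. For the first, third and fourth claims your plan is correct and is the standard argument: with $J=\left(\begin{array}{lr} 0 & -1 \cr 1 & 0 \cr \end{array}\right)$, the identities $J\sg_3J^{-1}=-\sg_3$, $J\overline{U}J^{-1}=U$, $J\overline{V}J^{-1}=V$ do verify both (\ref{LP3}) and (\ref{LP4}) at $\bar{\la}$ for $J\overline{\phi}$; the Wronskian $W(\phi,J\overline{\phi})=|\phi_1|^2+|\phi_2|^2>0$ gives linear independence when $\la$ is real and $\phi\neq 0$; and both coefficient matrices of the Lax pair are traceless, so Liouville's formula makes any Wronskian constant in $x$ and in $t$.

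Your treatment of the second claim, however, fails as sketched. If $\phi$ solves (\ref{LP3}) and $q$ is even, the reflected function $g(x)=\phi(-x)$ satisfies $\pa_x g=-\left(i\la\sg_3+U(-x)\right)g=\left(-i\la\sg_3-U\right)g$: the reflection flips the sign of \emph{both} terms, not only the $\sg_3$ term as you assert. Consequently the composition you describe (parity followed by the conjugation-swap $J\overline{(\cdot)}$) produces $J\overline{\phi}(-x)=(-\overline{\phi_2}(-x),\overline{\phi_1}(-x))^T$, which solves $\pa_x\psi=(-i\bar{\la}\sg_3-U)\psi$ --- the wrong sign on $U$ --- and hence is \emph{not} a solution of (\ref{LP3}) at $-\bar{\la}$; it is also not the vector in the lemma, which carries no minus sign, and that absence is precisely the point. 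The repair: for even $q$ the parity symmetry of the Lax pair is $\phi(x)\mapsto\sg_3\phi(-x)$, a solution at $-\la$, the extra $\sg_3$ restoring the sign of the off-diagonal $U$; composing this with your first symmetry gives $J\sg_3\overline{\phi}(-x)=\sg_1\overline{\phi}(-x)=(\overline{\phi_2}(-x),\overline{\phi_1}(-x))^T$, exactly the lemma's vector. Equivalently, verify directly that $\sg_1\sg_3\sg_1=-\sg_3$ and $\sg_1\overline{U}\sg_1=-U$, and, for the time equation, use that $q_x$ is odd when $q$ is even so that $\sg_1\overline{V(-x)}\sg_1=V$; with these identities the stated vector solves both (\ref{LP3}) and (\ref{LP4}) at $-\bar{\la}$.
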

For any real $\la_0$, by the well-known Floquet theorem \cite{MW79} 
and Lemma \ref{nlsym}, there are always two linearly independent 
Floquet (or Bloch) eigenfunctions $\phi^{\pm}$ to the Lax pair 
(\ref{LP3})-(\ref{LP4}) at $\la = \la_0$, such that
\[
\phi^+ = \left ( \begin{array}{c} \vphi_1 \cr \vphi_2 \cr \end{array}
\right )\ , \quad 
\phi^- = \left ( \begin{array}{c} -\overline{\vphi_2} \cr 
\overline{\vphi_1} \cr \end{array}\right )\ ,
\]
\[
\phi^+(x+2\pi ) = \rho \phi^+(x)\ , \quad 
\phi^-(x+2\pi ) = \bar{\rho} \phi^-(x)\ , \quad 
|\rho |^2 = 1 \ . 
\]
Since the Wronskian $W(\phi^+, \phi^-)$ is independent of $x$ and $t$, 
without loss of generality, we choose $W(\phi^+, \phi^-)=1$. Then
\[
S = \left ( \begin{array}{lr} \vphi_1 & -\overline{\vphi_2} \cr 
\vphi_2 & \overline{\vphi_1} \cr \end{array}\right ) 
\]
is a unitary solution to the Lax pair at $\la = \la_0$:
\[
S^{-1} = S^H =  \left ( \begin{array}{lr} \overline{\vphi_1} & 
\overline{\vphi_2} \cr -\vphi_2 & \vphi_1 \cr \end{array}\right ) \ ,
\quad |\vphi_1 |^2 + |\vphi_2 |^2 = 1 \ .
\]
Recall the definition of $\Ga$ (\ref{Gamma}), let 
\[
\Ga = S^{-1} \sg_3 S = \left ( \begin{array}{lr} |\vphi_1 |^2 - |\vphi_2 |^2
& -2 \overline{\vphi_1}\overline{\vphi_2} \cr 
-2 \vphi_1 \vphi_2  & |\vphi_2 |^2 - |\vphi_1 |^2 \cr \end{array}\right ) \ ,
\]
i.e.
\[
m_1 +im_2 =  -2 \vphi_1 \vphi_2  \ , \quad 
m_3 = |\vphi_1 |^2 - |\vphi_2 |^2 \ .
\]
Now for any $\phi$ solving the Lax pair (\ref{LP3})-(\ref{LP4}) at $\la$, 
define $\psi$ as
\[
\psi = S^{-1} \phi \ .
\]
Then $\psi$ solves the pair 
\begin{eqnarray}
\psi_x &=& i (\la -\la_0 )\Ga \psi \ , \label{TP1} \\
\psi_t &=& -\left \{ 2i (\la^2 - \la_0^2) \Ga +\frac{1}{2}(\la -\la_0 )
[\Ga , \Ga_x ] \right \} \psi \ . \label{TP2} 
\end{eqnarray}
The compatibility condition of this pair leads to the equation
\begin{equation}
\Ga_t = - \left \{ 4\la_0 \Ga_x + \frac{1}{2i} [\Ga , \Ga_{xx} ]\right \} \ .
\label{THE}
\end{equation}
Setting $\la_0 =0$ or performing the translation $t =t, \hat{x} = 
x - 4\la_0 t$, 
equation (\ref{THE}) reduces to the Heisenberg equation (\ref{PHE}). Therefore,
the Gauge transform $S$ transforms NLS equation into the Heisenberg equation.
Periodicity in $x$ may not persist.
\begin{example}
Consider the temporally periodic solution of the NLS equation (\ref{NLS}),
\[
q= a e^{i \th (t)}\ , \quad \th (t) = 2a^2t + \ga \ .
\]
The corresponding Bloch eigenfunction of the Lax pair 
(\ref{LP3})-(\ref{LP4}) at $\la = 0$ is
\[
\vphi = \frac{1}{\sqrt{2}} \left ( \begin{array}{c} e^{i\th /2} \cr 
e^{-i\th /2} \cr \end{array}\right )e^{iax}\ .
\]
Then 
\[
\Ga = S^{-1} \sg_3 S = \left ( \begin{array}{lr} 0 &  -e^{-i2ax} \cr 
-e^{i2ax} & 0 \cr \end{array}\right )\ ,
\]
which is called a domain wall.
\end{example}

{\bf Acknowledgment}: The second author Y. Charles Li is grateful to 
Professor Shufeng Zhang, Drs. Zhanjie Li and Zhaoyang Yang, and Mr. Jiexuan
He for many helpful discussions.

\end{document}